\documentclass[journal,twoside,web]{ieeecolor}
\usepackage{cite}
\usepackage{lcsys}
\usepackage[compact]{titlesec}
\usepackage{amsmath,amssymb,amsfonts}
\usepackage{hyperref}
\usepackage{graphicx}
\usepackage{textcomp}
\usepackage{times}
\usepackage{graphicx}
\usepackage{mathrsfs}
\usepackage{amstext} 
\usepackage{algorithm}
\usepackage{algorithmic}
\usepackage{setspace} 
\usepackage{amssymb} 
\usepackage{mathtools} 
\usepackage{verbatim}
\newcommand{\R}{\mathbb{R}}

\newtheorem{theorem}{Theorem}[section]

\newtheorem{lemma}[theorem]{Lemma}

\newtheorem{remark}[theorem]{Remark}

\newtheorem{proposition}[theorem]{Proposition}


\newcommand{\diag}{\operatorname{diag}}
\newcommand{\inte}{\operatorname{int}}
\newcommand{\trace}{\operatorname{trace}}
\newcommand{\longthmtitle}[1]{\mbox{}{\textit{(#1):}}}
\newcommand\oprocendsymbol{\hbox{$\bullet$}}
\newcommand\oprocend{\relax\ifmmode\else\unskip\hfill\fi\oprocendsymbol}
\def\eqoprocend{\tag*{$\bullet$}}

\graphicspath{{fig/}}

\usepackage{multirow}  

\newcommand{\revision}[1]{\textcolor{blue}{#1}}
\renewcommand{\revision}[1]{{#1}}

\allowdisplaybreaks

\def\BibTeX{{\rm B\kern-.05em{\sc i\kern-.025em b}\kern-.08em
    T\kern-.1667em\lower.7ex\hbox{E}\kern-.125emX}}

\begin{document}
\title{\LARGE Controller Design for Bilinear Neural Feedback Loops}
\author{Dhruv Shah \quad Jorge Cortés \thanks{D. Shah and J. Cort\'es
    are with Department of Mechanical and Aerospace Engineering, UC
    San Diego, USA, {\tt\small \{dhshah,cortes\}@ucsd.edu}}}

\maketitle

\begin{abstract}%
  This paper considers a class of bilinear systems with a neural
  network in the loop. These arise naturally when employing machine
  learning techniques to approximate general, non-affine in the input,
  control systems.  We propose a controller design framework that
  combines linear fractional representations and tools from linear
  parameter varying control to guarantee local exponential stability
  of a desired equilibrium. The controller is obtained from the
  solution of linear matrix inequalities, which can be solved offline,
  making the approach suitable for online applications. The proposed
  methodology offers tools for stability and robustness analysis of
  deep neural networks interconnected with dynamical systems.
\end{abstract}


\section{Introduction}
The application of machine learning techniques in control is becoming
popular, particularly due to the success of deep neural networks
(DNNs) as universal function approximators. DNNs can be used for,
e.g., modeling nonlinearities in the system, learning appropriate
lifting functions for a Koopman representation, modeling disturbances
and even learning controllers. Additionally, DNNs are the main
workhorse behind the modern perception pipeline used in robotics and
autonomous vehicles. Thus, dynamical systems with a neural network in
the loop, or neural feedback loops (NFLs), are becoming increasingly
prevalent. The main drawback of working with NFLs is the lack of
formal guarantees on closed-loop stability, robustness and safety,
which are crucial for deployment in practice. This paper aims to
bridge this gap by considering a class of bilinear systems with a
neural network in the loop and designing a controller with these guarantees.


\emph{Literature Review:} When considering neural feedback loops, a
majority of work deals with linear systems.  \revision{As an
  example,}~\cite{ME-GH-CS-JPH:21} proposes a computational approach
for backward reachability analysis of systems with neural networks in
feedback with LTI systems. \revision{The work}~\cite{HH-MF-MM-GJP:20}
proposes a semidefinite program (SDP) for reachability analysis of
linear time-varying NFLs
\revision{and}~\cite{CH-JF-WL-XC-QZ:19,SD-XC-SS:19} use polynomial
approximations of DNNs.  For perception-based control systems which
can be rewritten as LTI-NFLs, \cite{SD-NM-BR-VY:20} proposes an
approach to obtain robustness guarantees on the perception pipeline.
The restriction to LTI systems limits the applicability to real
systems.  Instead, \cite{
  NJ-MA-PS:24margins,NJ-MA-PS:24synthesis} consider a plant model with
a linear fractional representation (LFR) and propose a neural network
controller design with stability and dissipativity guarantees by
appropriately adding constraints to a generic reinforcement learning
algorithm. This, however, could make the learning process slow and
potentially limit the expressibility of the DNNs.  Here, we deal with
a class of generalized bilinear systems, as a first step towards more
general nonlinear systems.
The systems considered here arise, for instance, from
applying system identification techniques enhanced by deep learning,
e.g.,~\cite{GP-AA-DG-LL-AHR-THS:25} or as a result of employing
Koopman-based representations of general control systems,
cf.~\cite{MH-JC:24-auto}.
We build upon the idea
from\revision{\cite{RS-JB-FA:23,NJ-MA-PS:24synthesis}} of using
robust control techniques for dealing with nonlinearities to create a
controller design framework for non-affine systems.

\emph{Statement of Contributions:} We consider
%
%
a bilinear system with a neural network in the loop and propose a
controller design methodology for guaranteeing local exponential
stability of a desired equilibrium point. \revision{The proposed
  methodology is broadly applicable to a wide range of dynamical
  systems where neural networks are used to model the dynamics.}  The
contributions are the following.
%
%
Our first contribution is a reformulation of the system to a linear
fractional representation by appropriately recasting the
nonlinearities as uncertainties.  We characterize the uncertainty sets
using quadratic constraints. This result enables us to tackle the
stability and robustness properties of NFLs through the lens of robust
control. Our next contribution presents the controller synthesis using
linear matrix inequalities \revision{and guarantees a well-posed
  controller when the LMI is feasible}. The controller obtained after
synthesis is richer than a vanilla linear controller with certified
performance while having a convex formulation. We perform simulations
on \revision{a 4-dimensional system} with a nontrivial nonlinearity
approximated using a ReLU MLP.

\section{Preliminaries}
We review\footnote{
  We denote by $\R_{>0}$ (resp.  $\R_{\ge 0}$) the set of positive
  (resp., nonnegative) real numbers. We denote by $\{e_1,\dots,e_l\}$
  the standard basis of $\R^l$. The matrices
  $\{E_i\}_{i=1}^l \in \R^{l \times l}$ have 1 at the $(i,i)$ entry
  and zero elsewhere. Thus, $E_i z = z_i e_i$, for any $z \in \R^l$.
%
%
  We let $I_p$ denote the $p \times p$ identity matrix and
  $0_{p \times q}$ the $p \times q$ zero matrix (we drop the
  subscripts when clear from the context).  The Kronecker product of
  $A$ and $B$ is $A \otimes B$. We abbreviate $B^TAB$ with
  $\left[\star\right]^TAB $, so that $\star$ means that matrix blocks
  are inferred symmetrically.  We let $\diag [A_1,\dots,A_n]$
  represent the block-diagonal matrix with blocks $A_1,\dots,A_n$. If
  all blocks are the same, $A_i=A$, we use
  $\diag_n(A) = \diag [A,\dots,A]$. \revision{We also use the vertical
    vectorization notation
    $\text{vec}[A_1,A_2,\dots,A_n] = [A_1^T \, A_2^T \, \dots \,
    A_n^T]^T $.} We denote by $\mathbb{S}^n$ the set of $n\times n$
  symmetric matrices. For $M \in\mathbb{S}^n$, $M \succ 0$ (resp.
  $M \succeq 0 $) means $M$ is positive definite
  (resp. semidefinite). Negative (semi)definiteness is denoted
  analogously. For $\alpha,\beta \in \R$, a piecewise smooth function
  $\xi:\R \to \R$ is slope-restricted on $[\alpha,\beta]$ if
  $\alpha \le \xi'(x) \le \beta$ almost everywhere.  \revision{We use
    the superscript $+$ to denote time evolution for a discrete-time
    dynamics.}} concepts on neural networks and robust control.

\subsection{Implicit Neural Networks}
We employ a class of deep learning models known as implicit neural
networks (INNs), cf.~\cite{LEG-FG-BT-AA-AT:21}. INNs are based on
implicit prediction rules. For a given input vector $u \in \R^m$, the
output of the INN model is given by
\begin{equation}\label{eq:INN}
  \begin{aligned}
    \hat{y}(u) = Cx + Du + b_y , \,\,
    x = \Phi(Ax + Bu + b_x)
  \end{aligned}
\end{equation}
where $\Phi:\R^n \to \R^n$ is a nonlinear activation map and
$(A,B,C,D,b_x,b_y)$ are the model parameters with appropriate
dimensions.  In~\eqref{eq:INN}, $x \in \R^n$ is the internal state and
is obtained by solving the implicit equation.  In our treatment, we
exploit the fact that the structure of the prediction equations makes
INNs more amenable to control-theoretic analysis.

Any multi layer
perceptron (MLP) can be written in the form
\eqref{eq:INN}. 
Formally, consider a MLP of $L$ hidden layers with input $u \in \R^m$,
output $\hat{y}(u) \in \R^p$, and nonlinearity $\phi: \R \to \R$. \revision{When applying the nonlinearity to vector inputs, we assume that it acts componentwise.} For layer
$l = 0,1,\dots,L-1$, let $x_l \in \R^{n_l}$ be the hidden state and
$\left\{W_l,b_l\right\}$ the weights and biases with appropriate
dimensions. The MLP is given by
\begin{equation}\label{eq:MLP}
  \begin{aligned}    
    x_{l+1} &= \phi(W_l x_l + b_l), \quad l = 0,1,\dots,L-1, \\
    \hat{y}(u) &= W_L x_L + b_L, \,\, x_0 = u 
  \end{aligned}
\end{equation}

\begin{lemma}[{MLP to INN \cite[Sec
    3.2]{LEG-FG-BT-AA-AT:21}}]\label{lemma:MLP-INN}
  The MLP \eqref{eq:MLP} has the following INN representation
  \begin{equation}
    \label{eq:INN-MLP}
    \begin{aligned}
      \hat{y}(u) = H s + b_y, \,\, s = \phi(Fs + Gu + b_x)
    \end{aligned}
  \end{equation}
  where the matrices $F,G,H,b_x,b_y$ are given by
  {\small
    \begin{equation}
      F \hspace{-0.1cm}=\hspace{-0.1cm} \begin{bmatrix}
                                          0 \hspace{-0.1cm}& \hspace{-0.1cm}W_{L - 1} \hspace{-0.1cm}& \hspace{-0.1cm}0 \hspace{-0.1cm}& \hspace{-0.1cm}\dots \hspace{-0.1cm}& \hspace{-0.1cm}0 \\
                                          0 \hspace{-0.1cm}& \hspace{-0.1cm}0 \hspace{-0.1cm}& \hspace{-0.1cm}W_{L - 2} \hspace{-0.1cm}& \hspace{-0.1cm}\dots \hspace{-0.1cm}& \hspace{-0.1cm}0 \\
                                          \vdots \hspace{-0.1cm}& \hspace{-0.1cm}\vdots \hspace{-0.1cm}& \hspace{-0.1cm}\vdots \hspace{-0.1cm}& \hspace{-0.1cm}\vdots \hspace{-0.1cm}& \hspace{-0.1cm}\vdots \\    
                                          0 \hspace{-0.1cm}& \hspace{-0.1cm}0 \hspace{-0.1cm}& \hspace{-0.1cm}0 \hspace{-0.1cm}& \hspace{-0.1cm}\dots \hspace{-0.1cm}& \hspace{-0.1cm}W_1 \\
                                          0 \hspace{-0.1cm}& \hspace{-0.1cm}0 \hspace{-0.1cm}& \hspace{-0.1cm}0 \hspace{-0.1cm}& \hspace{-0.1cm}\dots \hspace{-0.1cm}& \hspace{-0.1cm}0 \\
\end{bmatrix}, \,\, 
G \hspace{-0.1cm}=\hspace{-0.1cm} \begin{bmatrix}
    0 \\
    0 \\
    \vdots \\
    0 \\
    W_0
                                  \end{bmatrix}, \,\,
b_x \hspace{-0.1cm}=\hspace{-0.1cm} \begin{bmatrix}
    b_{L-1} \\
    b_{L-2} \\
    \vdots \\
    b_1 \\
    b_0
\end{bmatrix}
\label{eq:INN_F_G}
\end{equation}
\begin{equation}
H = \begin{bmatrix}
W_L \hspace{-0.1cm}&0 \hspace{-0.1cm}&\dots \hspace{-0.1cm}& 0    
\end{bmatrix}, \,
{ s = \text{vec} [x_L \, x_{L-1} \, \dots \, x_1], \,\, }
b_y = b_L 
\end{equation}
}
\end{lemma}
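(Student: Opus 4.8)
The plan is to verify directly that the stacked vector of the MLP's hidden states solves the implicit equation in \eqref{eq:INN-MLP}, and then to argue that this solution is unique, so that the two models realize the same input--output map. The whole argument is a bookkeeping check that the nonzero blocks of $F,G,H,b_x,b_y$ encode the layer-to-layer recursion of \eqref{eq:MLP} written in reversed order.

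First I would fix an input $u\in\R^m$, let $x_0=u,x_1,\dots,x_L$ be the hidden states generated by \eqref{eq:MLP}, and set $s:=\text{vec}[x_L,x_{L-1},\dots,x_1]$, whose $i$-th block is $s_i=x_{L-i+1}$ for $i=1,\dots,L$; this is exactly the $s$ prescribed in the statement. Next I would expand the block rows of $Fs+Gu+b_x$ using the sparsity patterns in \eqref{eq:INN_F_G}: for $i<L$ the $i$-th block equals $W_{L-i}\,s_{i+1}+b_{L-i}=W_{L-i}\,x_{L-i}+b_{L-i}$ (the columns of $G$ do not meet these rows), while for $i=L$ it equals $W_0u+b_0=W_0x_0+b_0$. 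Applying $\phi$ componentwise and invoking the recursion \eqref{eq:MLP} once per block shows that the $i$-th block of $\phi(Fs+Gu+b_x)$ is $x_{L-i+1}=s_i$, i.e.\ $s=\phi(Fs+Gu+b_x)$. The output identity is then immediate: with $H=[\,W_L\ 0\ \cdots\ 0\,]$ and $b_y=b_L$ one gets $Hs+b_y=W_Lx_L+b_L=\hat{y}(u)$, which is the output equation of the MLP.

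Second, I would address well-posedness of the implicit equation so that \eqref{eq:INN-MLP} genuinely defines a model. The key observation is that only the block superdiagonal of $F$ is nonzero, so $F$ is nilpotent with $F^{L}=0$; equivalently, the dependency structure of $s=\phi(Fs+Gu+b_x)$ is acyclic. The equation can therefore be solved by back-substitution, starting from the last block (a function of $u$ alone) and moving upward, with each block determined uniquely by the blocks below it. Hence the $s$ built in the first step is the unique solution, and \eqref{eq:INN-MLP} reproduces \eqref{eq:MLP} exactly.

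I do not anticipate a substantive obstacle here, since the content is purely structural. The only point a terse proof might be faulted on is the well-posedness/uniqueness of the implicit fixed point, which is why I would make the nilpotency of $F$ (and the resulting back-substitution) explicit rather than leave it tacit.
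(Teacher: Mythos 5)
Your verification is correct: the block-by-block check that $s=\text{vec}[x_L,\dots,x_1]$ satisfies the implicit equation, together with uniqueness via the nilpotency of the strictly block upper-triangular $F$ and back-substitution, is exactly the standard argument. The paper itself gives no proof (the lemma is cited from the INN reference) and only remarks that well-posedness follows by back substitution from the upper-triangular structure of $F$, which matches the second step of your proposal.
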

The well-posedness of the implicit equation~\eqref{eq:INN-MLP}
\revision{(meaning that it has a unique solution) is guaranteed by
  back substitution due to the upper triangular structure of~$F$.}


\subsection{Quadratic Constraints and Neural Network Activations}
We use local incremental quadratic constraints to represent neural
network nonlinearities.  Given
$\mathcal{X},\mathcal{Y} \subseteq \R^n$ closed and
$\mathcal{Q} \subset \mathbb{S}^{2n}$, nonlinearity
$\phi : \R^n \to \R^n$ satisfies the incremental quadratic constraint
($\delta$QC) defined by $(\mathcal{X},\mathcal{Y},\mathcal{Q})$ if
\begin{equation*}
  \begin{bmatrix}
    \phi(x) - \phi(y) \\
    x - y     
  \end{bmatrix}^T 
  Q
  \begin{bmatrix}
    \phi(x) - \phi(y) \\
    x - y  
  \end{bmatrix} \geq 0,  \quad \forall (x,y) \in \mathcal{X}
  \times \mathcal{Y},
\end{equation*}
for any $Q \in \mathcal{Q}$.  Quadratic Constraints (QCs) are defined
analogously by omitting the argument~$y$. The next result
characterizes the $\delta$QCs satisfied by neural network activations.

\begin{lemma}\longthmtitle{Neural Network
    Nonlinearities~\cite{MF-AR-HH-MM-GP:19}}\label{lemma:NN_nonlinearities}
  Let $\xi:\R \to \R$ be slope-restricted on $[\alpha,\beta]$, where
  $\alpha,\beta \in \R$.  Let
  $\mathcal{T}_n := \left\{\, T \in \R^{n \times n} \mid T =
    \diag_n(\lambda) \,, \,\, \lambda \in \R_{>0} \right\}$.
  %
  %
  Then, the vector-valued function
  $\phi(x) = [\xi(x_1) \, \dots \, \xi(x_n)]^T: \R^n \to \R^n$
  satisfies the $\delta$QC defined by $(\R^n,\R^n,\mathcal{Q})$, where
  \begin{equation*}
    \mathcal{Q} = \Big\{ Q \in \mathbb{R}^{2n \times 2n} \, \Bigg| \, 
    Q = \hspace{-0.1cm}
      \begin{bmatrix}
        -2T \hspace{-0.2cm}& (\alpha + \beta) T \\ 
        (\alpha + \beta) T \hspace{-0.2cm}& -2\alpha \beta T
      \end{bmatrix} , \, \forall \, T \in \mathcal{T}_n \Big\} . 
  \end{equation*}
\end{lemma}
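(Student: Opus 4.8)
The plan is to peel the vector $\delta$QC apart into $n$ independent scalar sector inequalities---one per coordinate of $\phi$---and then recombine them with the diagonal positive weight matrix $T$.

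First I would record the scalar incremental sector bound for $\xi$. Since $\xi$ is piecewise smooth, for any $u,v \in \R$ we have $\xi(u)-\xi(v)=\int_v^u \xi'(t)\,dt$, and because $\alpha \le \xi'(t) \le \beta$ almost everywhere, the two quantities $\xi(u)-\xi(v)-\alpha(u-v)$ and $\beta(u-v)-\xi(u)+\xi(v)$ always share the same sign (both carrying the sign of $u-v$, and both vanishing at $u=v$). Hence their product is nonnegative for all $u,v$:
\[
  \bigl(\xi(u)-\xi(v)-\alpha(u-v)\bigr)\bigl(\beta(u-v)-\xi(u)+\xi(v)\bigr) \ge 0 .
\]

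Next I apply this coordinatewise. Abbreviate $\Delta\phi := \phi(x)-\phi(y)$ and $\Delta x := x-y$, so $(\Delta\phi)_i=\xi(x_i)-\xi(y_i)$ and $(\Delta x)_i = x_i-y_i$. Expanding the scalar product above with $u=x_i$, $v=y_i$ yields, for each $i\in\{1,\dots,n\}$,
\[
  (\Delta\phi)_i^2 - (\alpha+\beta)(\Delta\phi)_i(\Delta x)_i + \alpha\beta\,(\Delta x)_i^2 \le 0 .
\]
Now fix any $\lambda\in\R_{>0}^n$, multiply the $i$-th inequality by $\lambda_i>0$, and sum over $i$. Since $T=\diag_n(\lambda)$ is diagonal and symmetric, the three sums are exactly $\Delta\phi^T T\Delta\phi$, $\Delta\phi^T T\Delta x$ (which equals $\Delta x^T T\Delta\phi$), and $\Delta x^T T\Delta x$. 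Multiplying the resulting aggregate inequality by $-2$ and assembling the terms into a block quadratic form gives precisely
\[
  \begin{bmatrix}\Delta\phi \\ \Delta x\end{bmatrix}^T
  \begin{bmatrix}-2T & (\alpha+\beta)T \\ (\alpha+\beta)T & -2\alpha\beta T\end{bmatrix}
  \begin{bmatrix}\Delta\phi \\ \Delta x\end{bmatrix} \ge 0 ,
\]
which is the claimed $\delta$QC; it holds for all $(x,y)\in\R^n\times\R^n$ because the underlying scalar bound is global, so $\mathcal{X}=\mathcal{Y}=\R^n$.

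I do not expect a genuine obstacle. The only step that needs a little care is the passage from the almost-everywhere slope condition to the pointwise secant bound on $\xi$, which is why I would phrase it through the fundamental theorem of calculus for the (absolutely continuous) piecewise smooth $\xi$ rather than a bare mean value theorem. It is also worth one line explaining why $T$ must be diagonal: decoupling the $n$ scalar inequalities is exactly what permits taking an arbitrary nonnegative combination, whereas a general symmetric multiplier would not be preserved under the summation; the scalar factor $2$ in the statement is merely cosmetic.
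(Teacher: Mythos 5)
Your proof is correct and is the standard argument for this result; the paper itself cites \cite{MF-AR-HH-MM-GP:19} without reproducing a proof, and your route --- coordinatewise incremental sector bound from the a.e.\ slope restriction via the fundamental theorem of calculus, followed by a positive diagonal combination --- is exactly the derivation in that reference. The only (harmless) discrepancy is that you work with a general positive diagonal $T=\diag[\lambda_1,\dots,\lambda_n]$, whereas the paper's $\mathcal{T}_n$ as literally written ($T=\diag_n(\lambda)$ with scalar $\lambda\in\R_{>0}$) is $\lambda I_n$; your argument subsumes that case, so the claimed $\delta$QC follows.
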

\smallskip

This type of representation
allows us to derive LMIs for controller synthesis for systems with
neural feedback loops, at the expense of introducing conservatism.

\subsection{Robust Control}

Linear fractional representations (LFR) are a general, powerful tool
to represent uncertainty in dynamical systems \cite{KZ-JCD:98}.
Consider an autonomous system $x^+ = F(\delta) x$ with uncertainty,
where $F: \R^p \to \R^{n\times n}$ is a matrix-valued function that
depends rationally on the uncertain parameter $\delta \in \R^p$. An LFR of $F(\delta)$ is a pair $(H,\Delta(\delta))$, where
\begin{equation*}
  H =
  \begin{bmatrix}
    A & B \\
    C & D
  \end{bmatrix} \in \R^{(n + n_z) \times (n + n_w)}
\end{equation*}
does not depend on $\delta$ and $\Delta: \R^p \to \R^{n_w \times n_z}$
is a linear function of $\delta$ that satisfies the following: for all
$\delta$ for which $I - D \, \Delta(\delta)$ is invertible and for all
$(\eta,\zeta) \in \R^{2n}$,
%
%
it holds that $\zeta = F(\delta) \eta$ iff there exist
$w \in \R^{n_w}$ and $z \in \R^{n_z}$ such that
\begin{equation*}
  \begin{bmatrix}
    \zeta \\
    z
  \end{bmatrix} = 
  \begin{bmatrix}
    A & B \\
    C & D
  \end{bmatrix}
  \begin{bmatrix}
    \eta \\
    w
  \end{bmatrix}, \quad w = \Delta(\delta)z          
\end{equation*}
The LFR is well posed at $\delta$ if $I - D \, \Delta(\delta)$ is
invertible. We refer to $\Delta$ as the matrix parameter. It can be
shown~\cite{CS-SW:00} that any uncertain dynamical system with rational parameter dependence can be rewritten as an LFR,
\begin{equation*}
  \begin{bmatrix}
    x^+ \\
    z
  \end{bmatrix} = 
  \begin{bmatrix}
    A & B \\
    C & D
  \end{bmatrix}
  \begin{bmatrix}
    x \\
    w
  \end{bmatrix}, \quad w = \Delta(\delta)z
\end{equation*}
In what follows, we drop the $\delta$ dependence for simplicity.

%
%
%
%

Given an LFR depending on the matrix parameter
$\Delta \in \mathbf{\Delta} \subset \R^{n_{w_u} \times n_{z_u}}$,
%
%
consider an additional performance channel
$w_p \in \R^{n_{w_p}},z_p \in \R^{n_{z_p}}$ modeling variables of
interest
%
%
%
to obtain
\begin{equation}\label{eq:LFR2}
  \begin{bmatrix}
    x^+ \\
    z_u \\
    z_p
  \end{bmatrix} = 
  \begin{bmatrix}
    A & B_u & B_p \\
    C_u & D_{uu} & D_{up} \\
    C_p & D_{pu} & D_{pp}
  \end{bmatrix}
  \begin{bmatrix}
    x \\
    w_u \\
    w_p
  \end{bmatrix}, \quad w_u = \Delta z_u
\end{equation}
%
Suppose the LFR is well posed for all $\Delta \in
\mathbf{\Delta}$. Then, \textbf{Robust quadratic performance} is achieved with
index $\Pi_p =
\begin{bsmallmatrix}
  Q_p & S_p \\
  S_p^T & R_p
\end{bsmallmatrix} $ conformable to
$\begin{bsmallmatrix}
   w_p \\ z_p
 \end{bsmallmatrix}$ with $R_p \succeq 0$, $Q_p \prec 0$ if:
 \begin{LaTeXdescription}
\item[Robust stability:] there exists constants $K,\alpha \in \R$ such
  that, 
  \begin{equation*}
    \|x(t_1)\| \leq K \, \|x(t_0)\|^{-\alpha (t_1 - t_0)} \,\,
    \text{for $t_1 \geq t_0$} 
  \end{equation*}
  if $w_p(t) = 0$ and for all $\Delta \in \mathbf{\Delta}$; and
\item[Dissipativity:] there exist $\epsilon,\delta > 0$
  %
  %
  such that
  \begin{equation*}
    \sum_{k = 0}^\infty
    \begin{bmatrix}
      w_p(k)
      \\
      z_p(k)
    \end{bmatrix}^T
    \, \Pi_p
    \,
    \begin{bmatrix}
      w_p(k)
      \\
      z_p(k)
    \end{bmatrix}
    \leq - \epsilon
    \sum_{k = 0}^\infty \|w_p(k)\|^2 
  \end{equation*}
  for $\|w_p\|^2 \leq \delta$, $x(0) = 0$ and all $\Delta \in \mathbf{\Delta}$.
\end{LaTeXdescription}
Without the addition of the performance channel, the notion reduces to
robust stability.  Robust quadratic performance can be checked via
LMIs, cf.\cite[Thm~2]{CWS:01}, whose structure remains the same if the
performance channel is dropped. Hence, in our discussion we only deal
with the robust stability problem.
%
%

%
%
%
%

\section{Problem Formulation}\label{sec:problem-formulation}
Consider the following discrete-time generalized bilinear system with
state $z \in \R^l$ and control input $u \in \R^m$
\begin{equation}\label{eq:og_sys}
  z^+ = A_0 z + B_0 u + \revision{\tilde{D}\, (z \otimes u) + \phi(u)}
  + \Psi(u) z 
\end{equation}
where $A_0 \in \R^{l \times l}$, $B_0 \in \R^{l \times m}$,
$\tilde{D} \in \R^{m \times lm}$, $\phi: \R^m \to \R^l$ and
$\Psi: \R^m \to \R^{l \times l}$ are implicit neural networks (INNs)
that approximate certain nonlinear functions, \revision{with their weights pre-trained and fixed.} Note that $\phi,\Psi$ are in general nonlinear functions of $u$.  We refer to \eqref{eq:og_sys}
as a \emph{bilinear-NFL} system.  Such systems arise naturally when
learning unknown non-affine dynamics from data. \revision{For instance, using the Koopman control family framework~\cite{MH-JC:24-auto}, and given a general
  nonlinear system $x^+ = f(x,u)$, one can obtain an approximate
  lifted representation of the form $z^+ = \mathcal{A}(u) z$, where
  $\mathcal{A}$ is a matrix-valued function parameterized by a
  specific function family (e.g., neural networks), and
  $z = \text{vec}[H(x),1]$ is the lifted state. The bilinear NFL
  \eqref{eq:og_sys} arises naturally by decomposing $\mathcal{A}$ into
  a linear term and a residual nonlinearity.}
%
%

%

Assume that the system~\eqref{eq:og_sys} is stabilizable. Let $z_*$ be
an equilibrium when $u = u_*$.  We also assume the same neural network
nonlinearity in $\phi,\Psi$ and that it acts elementwise and is
slope-restricted on $[\alpha,\beta]$.
%
%
%
%
Let, $Q_z \prec 0$ and $R_z \succ 0$. Consider the region of interest $\mathcal{Z}$
around $z_*$ given by
\begin{equation}\label{eq:Z_defn}
  \mathcal{Z} = z_* + \Big\{z \in \R^l \, \Bigg| \, 
\begin{bmatrix}
  z \\ 
  1
\end{bmatrix}^T \, 
\begin{bmatrix}
  Q_z & S_z \\ 
  S_z^T & R_z
\end{bmatrix} \, 
\begin{bmatrix}
  z \\ 
  1
\end{bmatrix} \, 
\geq 0
\Big\},
\end{equation}

\textbf{Problem:} Design a static state feedback controller
$u = k(z)$, with $u_* = k(z_*)$, to enforce local exponential
stability of $z_*$ while ensuring that the system state $z$ remains
in~$\mathcal{Z}$.

This problem is challenging due to control entering nonlinearly in the
dynamics~\eqref{eq:og_sys} through the nonlinear function $\phi$, the
lack of a closed-form expression for the latter, and the bilinearity
of the term $\Psi(u)z$.
To tackle it, we adopt a robust-control inspired approach.  We start
by abstracting out the nonlinearities of the system, representing them
using uncertain matrices. The basic idea relies on the observation
that a nonlinear function $f \in \mathcal{X} \subseteq \R^n \to \R^n$
can be represented as $f(x) = \Delta_{f}(x) x$, where
$\Delta_f(x) \in \R^{n \times n}$. Instead of dealing the explicit
dependence of $\Delta_f$ on $x$, we treat such matrices as
uncertainties that can be measured online.  Next, we construct an
appropriate uncertainty set, denoted
$\mathbf{\Delta_f} \subseteq \R^{n \times n}$,
and require it to be such that
$\mathbf{\Delta_f} \supseteq \{\Delta \in \R^{n \times n} \, \vert \,
f(x) = \Delta \, x, x \in \mathcal{X} \}$.
%
%
The resulting construction is then amenable to controller synthesis
via LMIs.  This over-approximation introduces conservatism, which is
the cost for obtaining a convex formulation for controller synthesis.

\revision{
  Although we work with \eqref{eq:og_sys} for simplicity of
  exposition, we note that the controller design procedure presented
  here also works for the more general class of systems
    \begin{align}\label{eq:general_sys}
      z^+ &= A_0 z + B_0 u + \tilde{D}\, (z \otimes u) +  \phi_u(u)
            \notag
      \\ 
          & \quad + \phi_z(z) + \Psi_z(z) u + \Psi_u(u) z + \Psi(u,z),
    \end{align}      
    where $\phi_u,\phi_z,\Psi_u,\Psi_z,\Psi$ are neural networks of
    appropriate sizes. 
    In this model, the presence of the terms beyond $\Psi$ strikes a
    balance between specificity and generality to exploit structure
    and reduce conservatism in controller design.
}

\section{Controller Design}\label{sec:Control_design}
Here we carry out the design strategy outlined in
Section~\ref{sec:problem-formulation}. We reformulate the bilinear
NFL~\eqref{eq:og_sys}
%
%
to obtain an LFR by abstracting out all nonlinearities in the
system using fabricated uncertainties, cf. Section
\ref{sec:reform_INN}.  Next, we obtain appropriate uncertainty sets
for representing these uncertainties using quadratic constraints,
cf. Section \ref{sec:unc_set_desc}. This gives rise to a reformulation
of the nonlinear control problem as a linear parameter varying (LPV)
control problem, for which we obtain a set of LMIs for controller
synthesis, cf. Section \ref{sec:con_syn}.

\subsection{LFR Reformulation}\label{sec:reform_INN}
\vspace{-0.14cm}
%
%
Let $\{\psi_{i}: \R^m \to \R^l\}_{i=1}^l$ denote the columns of the
INN $\Psi: \R^m \to \R^{l \times l}$ in~\eqref{eq:og_sys}, so that
$ \Psi(u) = [ \psi_{1}(u) \; \psi_{2}(u) \; \dots \; \psi_{l}(u)]$,
for $u \in \R^m$.  Each $\psi_{i}$ and $\phi$ are represented using
well-posed INNs of the form \eqref{eq:INN} as
\begin{align}\label{eq:phi_implicit}
  {
  \phi(u)}
  & { = \!
    H^{\phi} s^{\phi} \!+\! J^{\phi} u \!+\! b_y^{\phi}, \, s^{\phi} =
    \xi(F^{\phi} {s}^{\phi} \!+\! G^{\phi} u \!+\! {b}_x^{\phi}) ,} \notag
  \\
  \psi_{{i}}(u)
  & = \! H_{{i}}^{\psi} {s}_{i}^{\psi} \!+\!
    J_{i}^{\psi} u \!+\! {b}_y^i, \, {s}_{{i}}^{\psi}
    = \xi(F_{i}^{\psi} {s}_{i}^{\psi} \!+\!
    G_{i}^{\psi} u \!+\! {b}_x^i)
\end{align}
\revision{
  $F^{\#}_i \in \R^{k_{\#} \times k_{\#}},G^{\#}_i \in \R^{k_{\#}
    \times m},H^{\#}_i \in \R^{l \times k_{\#}},J^{\#}_i \in \R^{l
    \times m}$ for $\# \in \{\psi,\phi\}$ are constant matrices,
  ${b}^i_x \in \R^{k_{\psi}}, {b}^i_y \in \R^l,{b}^{\phi}_x \in
  \R^{k_{\phi}},{b}^{\phi}_y \in \R^l$ are bias terms} and
$\xi : \R \to \R$ is the neural network nonlinearity that acts
elementwise and is slope-restricted on $[\alpha,\beta]$.
Then, 
\begin{equation}
  \Psi(u)z = \sum_{i=1}^{l}H^{\psi}_i {s}^{\psi}_i z_i + 
  \sum_{i=1}^{l}J^{\psi}_i u z_i + 
  \sum_{i=1}^{l} {b}^i_y z_i. 
  \label{eq:phi_z}
\end{equation}
%
%
\revision{Note that $\Psi(u)z$ is non-affine in $u$ because
  $s^{\psi}_i$ depends nonlinearly on $u$ through the implicit
  equation \eqref{eq:phi_implicit}.}  We introduce
$F^{\psi} \in \R^{l k_{\psi} \times l k_{\psi}}, G^{\psi} \in \R^{l
  k_{\psi} \times m}, H^{\psi} \in \R^{l \times l^2{k_{\psi}}},
s^{\psi} \in \R^{l k_{\psi}}, B^{\psi}_y \in \R^{l \times l}$, and
$b^{\psi}_x \in \R^{l k_{\psi}}$
\begin{align*}
  H^{\psi} & = [ H^{\psi}_1 \, \dots \, H^{\psi}_l] [ E_1 \otimes I_{k_{\psi}} \,  \dots \, E_l
      \otimes I_{k_{\psi}}] \in \R^{l \times l^2{k_{\psi}}},
  \\
  J^{\psi}  & = [J^{\psi}_1 \dots  J^{\psi}_l],
       \,\, B^{\psi}_y  =  [b_y^1  \dots  b_y^l ], \,\,
       F^{\psi} =  \diag [F^{\psi}_1 \dots  F^{\psi}_l ],
  \\
  G^{\psi}  & =\hspace{-0.1cm} \text{vec}[G^{\psi}_1  \dots  G^{\psi}_l] , \,
  s^{\psi} = \hspace{-0.1cm}\text{vec}[ s^{\psi}_1  \dots s^{\psi}_l ], \,
  b^{\psi}_x  = \text{vec}[ b_x^1  \dots b_x^l ].
\end{align*}
%
%
We now show the bilinear-NFL system can be rewritten as an LFR by treating the nonlinearities as uncertainties.
\begin{proposition}\longthmtitle{LFR
    Reformulation}\label{thm:LFR_reformulation}
  The bilinear-NFL system~\eqref{eq:og_sys} can be rewritten as an LFR
  with a bias term
  \begin{subequations}\label{eq:compact_I-SS-all}
    \begin{small}
    \revision{
    \begin{equation}\label{eq:compact_I-SS}
      \hspace{-0.3cm}\begin{bmatrix}
        z^+ \\
        u \\
        s^{\psi} \\
        v^{\phi} \\
        v^{\psi}
      \end{bmatrix}
      \hspace{-0.14cm} = \hspace{-0.14cm}
      \begin{bmatrix}
        \hspace{-0.1cm}A_0 + B^{\psi}_y \hspace{-0.3cm}&  B_0 + J^{\phi} \hspace{-0.3cm}& \tilde{D} + J^{\psi} \hspace{-0.3cm}& H^{\psi} \hspace{-0.3cm}& H^{\phi} \hspace{-0.3cm}& 0 \\
        0 \hspace{-0.3cm}& I \hspace{-0.3cm}& 0 \hspace{-0.3cm}& 0 \hspace{-0.3cm}& 0 \hspace{-0.3cm}& 0 \\
        0 \hspace{-0.3cm}& 0 \hspace{-0.3cm}& 0 \hspace{-0.3cm}& 0 \hspace{-0.3cm}& 0 \hspace{-0.3cm}& I \\
        0 \hspace{-0.3cm}& G^{\phi} \hspace{-0.3cm}& 0 \hspace{-0.3cm}& 0 \hspace{-0.3cm}& F^{\phi} \hspace{-0.3cm}& 0 \\
        0 \hspace{-0.3cm}& G^{\psi} \hspace{-0.3cm}& 0 \hspace{-0.3cm}& 0 \hspace{-0.3cm}& 0 \hspace{-0.3cm}& F^{\psi} \\
      \end{bmatrix} \hspace{-0.2cm}
      \begin{bmatrix}
        z \\
        u \\
        w_u \\
        w_{\psi} \\
        s^{\phi} \\
        s^{\psi}
      \end{bmatrix} \hspace{-0.14cm}+ \hspace{-0.14cm}
      \begin{bmatrix}
        b^{\phi}_y \\
        0 \\
        0 \\
        0 \\
        b^{\phi}_x \\
        b^{\psi}_x 
      \end{bmatrix}
    \end{equation}
    }
  \end{small}    
  with the outputs $u \in \R^m$,
  $s^{\psi} \in \R^{lk_{\psi}},v^{\phi} \in \R^{k_{\phi}}$,
  $v^{\psi} \in \R^{lk_{\psi}}$ nonlinearly fedback using uncertain
  matrices $\Delta_{m} \in \R^{lm \times m}$,
  $\Delta_{k_{\psi}} \in \R^{lk_{\psi} \times k_{\psi}}$,
  $\Delta^{\phi}_{\xi} \in \R^{k_{\phi} \times k_{\phi}}$,
  $\Delta^{\psi}_{\xi} \in \R^{lk_{\psi} \times lk_{\psi}}$, to obtain
  $w_u \in \R^{lm}$, $w_{\psi} \in \R^{l^2k_{\psi}}$,
  $s^{\phi} \in \R^{k_{\phi}}$, $s^{\psi} \in \R^{l k_{\psi}}$ as
  \begin{equation}\label{eq:LFR_fedback}
    \begin{bmatrix}
      w_u \\
      \revision{w_{\psi}} \\
        \revision{s^{\phi}} \\
        \revision{s^{\psi}}
    \end{bmatrix}
    = 
    \begin{bmatrix}
      (z \otimes I_m) u\\
      \revision{\diag_l (z \otimes I_{k_{\psi}}) s^{\psi}} \\
      \revision{\xi(v^{\phi})} \\
      \revision{\xi(v^{\psi})}      
    \end{bmatrix} 
    = 
    \begin{bmatrix}
      \Delta_m \, u \\
      \revision{\diag_l(\Delta_{k_{\psi}}) \, s^{\psi}} \\
      \revision{\Delta^{\phi}_{\xi} \, v^{\phi}} \\
      \revision{\Delta^{\psi}_{\xi} \, v^{\psi}}            
    \end{bmatrix} . 
  \end{equation}
  \end{subequations}
\end{proposition}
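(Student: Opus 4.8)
The plan is to prove the equivalence by direct substitution: show that the block system~\eqref{eq:compact_I-SS} together with the nonlinear feedback~\eqref{eq:LFR_fedback} reproduces~\eqref{eq:og_sys} exactly, once the auxiliary signals $w_u,w_\psi,v^\phi,v^\psi$ are identified with the obvious nonlinear expressions in $(z,u,s^\phi,s^\psi)$. First I would substitute the INN representations~\eqref{eq:phi_implicit} and the expansion~\eqref{eq:phi_z} of $\Psi(u)z$ into~\eqref{eq:og_sys} and group the resulting terms into three classes: (i) the terms that are linear in $(z,u)$ or constant, namely $A_0 z$, $B_0 u$, the affine part $J^\phi u + b_y^\phi$ of $\phi$, and the $z$-linear offset $\sum_i b_y^i z_i = B^\psi_y z$ of $\Psi(u)z$, which populate the $(A_0+B^\psi_y)$, $(B_0+J^\phi)$ and $b_y^\phi$ entries of the first row; (ii) the bilinear terms $\tilde D(z\otimes u)$ and $\sum_i J^\psi_i u\, z_i$; and (iii) the activation-dependent terms $H^\phi s^\phi$ and $\sum_i H^\psi_i s^\psi_i z_i$, together with the two implicit equations defining $s^\phi$ and the $s^\psi_i$.

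Next I would introduce one loop signal per nonlinearity and match coefficients. For the bilinear part, writing $w_u := z\otimes u = (z\otimes I_m)u$ and using $J^\psi = [J^\psi_1\,\dots\,J^\psi_l]$ gives $\tilde D(z\otimes u)+\sum_i J^\psi_i u z_i = (\tilde D+J^\psi)w_u$, which is precisely the coefficient of $w_u$ in the first row. The key step is the $\Psi$-activation term: with $w_\psi := \diag_l(z\otimes I_{k_\psi})\,s^\psi$, whose $i$-th block equals $(z\otimes I_{k_\psi})s^\psi_i = z\otimes s^\psi_i$, the Kronecker mixed-product identity $(E_i\otimes I_{k_\psi})(z\otimes s^\psi_i) = (E_i z)\otimes s^\psi_i = z_i(e_i\otimes s^\psi_i)$ together with the definition of $H^\psi$ yields
\begin{equation*}
  H^\psi w_\psi = [H^\psi_1\,\dots\,H^\psi_l]\,\sum_{i=1}^{l} z_i\,(e_i\otimes s^\psi_i) = \sum_{i=1}^{l} H^\psi_i s^\psi_i z_i ,
\end{equation*}
which is exactly the remaining contribution of $\Psi(u)z$. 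The implicit equations are recovered from rows $4$ and $5$, which read $v^\phi = G^\phi u + F^\phi s^\phi + b^\phi_x$ and $v^\psi = G^\psi u + F^\psi s^\psi + b^\psi_x$; here I would check that the block assembly $F^\psi = \diag[F^\psi_1,\dots,F^\psi_l]$, $G^\psi = \text{vec}[G^\psi_1,\dots,G^\psi_l]$, $b^\psi_x = \text{vec}[b^1_x,\dots,b^l_x]$, $s^\psi = \text{vec}[s^\psi_1,\dots,s^\psi_l]$ is exactly what makes the stacked relation $s^\psi = \xi(v^\psi)$ decouple into the $l$ per-column equations of~\eqref{eq:phi_implicit}. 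Rows $2$ and $3$ are the tautologies $u=u$ and $s^\psi=s^\psi$ whose only purpose is to export $u$ and $s^\psi$ as LFR outputs so that the uncertainties may act on them.

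Finally I would close the loops by exhibiting admissible matrices for~\eqref{eq:LFR_fedback}: $\Delta_m = z\otimes I_m$ gives $w_u = \Delta_m u$; $\Delta_{k_\psi} = z\otimes I_{k_\psi}$ gives $w_\psi = \diag_l(\Delta_{k_\psi})\,s^\psi$; and, since $\xi$ acts componentwise, $\xi(v) = \Delta_\xi(v)\,v$ for the diagonal matrix $\Delta_\xi(v)$ with $(j,j)$-entry $\xi(v_j)/v_j$, so that $\Delta^\phi_\xi := \Delta_\xi(v^\phi)$ and $\Delta^\psi_\xi := \Delta_\xi(v^\psi)$ produce $s^\phi = \Delta^\phi_\xi v^\phi$ and $s^\psi = \Delta^\psi_\xi v^\psi$. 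Substituting these identifications back into the first row of~\eqref{eq:compact_I-SS} recovers $z^+ = A_0 z + B_0 u + \tilde D(z\otimes u) + \phi(u) + \Psi(u)z$, which is~\eqref{eq:og_sys}, establishing the claimed equivalence.

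I expect the only real obstacle to be the bookkeeping: keeping the block dimensions $l,m,k_\phi,k_\psi$ straight through the Kronecker manipulations, in particular the collapse $H^\psi w_\psi = \sum_i H^\psi_i s^\psi_i z_i$ and the decoupling of the stacked implicit equation into its $l$ components; everything else is term matching. I would also note that this proposition only asserts that an LFR form \emph{exists} --- its well-posedness (invertibility of $I - D\Delta$ for the admissible uncertainties) is a separate matter, guaranteed by the upper-triangular structure of $F^\phi$ and $F^\psi$ via back-substitution, exactly as observed after~\eqref{eq:INN-MLP}.
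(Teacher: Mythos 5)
Your proposal is correct and follows essentially the same route as the paper's proof: substitute the INN representations, use the Kronecker identities $z\otimes u=(z\otimes I_m)u$ and $(E_i\otimes I_{k_\psi})(z\otimes s^{\psi}_i)=(E_iz)\otimes s^{\psi}_i$ to collapse the bilinear and $\Psi$-activation terms into $(\tilde{D}+J^{\psi})w_u$ and $H^{\psi}w_{\psi}$, and then match coefficients row by row. Your additional steps (explicitly exhibiting the admissible $\Delta$ matrices and noting that well-posedness is a separate issue settled by the triangular structure of $F^{\phi},F^{\psi}$) are consistent with, and slightly more detailed than, what the paper records.
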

%
%
\begin{proof} 
  Note that    
  \vspace*{-3ex}
  \begin{small}    
  \begin{align*}
    \sum_{i = 1}^{l}H^{\psi}_i {s}^{\psi}_i z_i + J^{\psi}_i u z_i\hspace{-0.08cm}=\hspace{-0.08cm} 
    [ H^{\psi}_1 \,  \dots \, H^{\psi}_l] 
    \left[\begin{matrix}
      s^{\psi}_1 z_1 \vspace{-0.1cm}
      \\
      \vdots
      \\
      s^{\psi}_l z_l
    \end{matrix}\right]
    \hspace{-0.08cm}+\hspace{-0.08cm}
    [ J^{\psi}_1 \, \dots \, J^{\psi}_l] 
    \left[\begin{matrix}
      u z_1
      \vspace{-0.1cm}
      \\
      \vdots
      \\
      u z_l
    \end{matrix}\right]
  \end{align*}
  \end{small}
  We can rewrite
  $
  \text{vec}[uz_1 \dots uz_l] 
  = (z \otimes u) = (z \otimes I_m)u$
  and
  \begin{small}
  \begin{multline*}
    \begin{bmatrix}
      s^{\psi}_1 z_1
      \vspace{-0.1cm}
      \\
      \vdots
      \\
      s^{\psi}_l z_l
    \end{bmatrix}
    =
    \begin{bmatrix}
      s^{\psi}_1 z_1
      \vspace{-0.1cm}
      \\
      \vdots
      \\
      0
    \end{bmatrix} + 
    \dots +
    \begin{bmatrix}
      0 \vspace{-0.1cm}
      \\
      \vdots\\
      s^{\psi}_l z_l
    \end{bmatrix} 
    = \sum_{i=1}^{l} (E_i z) \otimes s^{\psi}_i
    \\
    = \sum_{i = 1}^{l}(E_i \otimes I_{k_{\psi}})  (z \otimes s^{\psi}_i) 
    = \sum_{i = 1}^{l}(E_i \otimes I_{k_{\psi}}) (z \otimes I_k) s^{\psi}_i .
  \end{multline*}
\end{small}
  Therefore, we can rewrite \eqref{eq:phi_z} as
  \begin{equation}\label{eq:Psi_expr}
    \Psi(u) z = H^{\psi}
    \begin{bmatrix}
      (z \otimes I_{k_{\psi}}) s^{\psi}_1
      \vspace{-0.1cm}
      \\
      \vdots
      \vspace{-0.1cm}
      \\
      (z \otimes I_{k_{\psi}})s^{\psi}_l
    \end{bmatrix} + J^{\psi} (z \otimes I_m) u + B^{\psi}_y z .
  \end{equation}
  \revision{ Using \eqref{eq:phi_implicit} and the
    expression~\eqref{eq:Psi_expr}, we rewrite \eqref{eq:og_sys} as
    $ z^+ = (A_0 + B^{\psi}_y)z + (B_0 + J^{\phi})u + (\tilde{D} +
    J^{\psi}) w_u + H^{\psi} w_{\psi} + H^{\phi} s^{\phi}$, leading to
    the LFR \eqref{eq:compact_I-SS-all}.}
\end{proof}
%
%

It is important to note the special form of the nonlinear feedback
terms in \eqref{eq:LFR_fedback}, where $\Delta_m$, $\Delta_{k_{\psi}}$
depend only on $z$, which can be measured online and is restricted to
belong to~$\mathcal{Z}$. This is more advantageous than having the
matrices depend on $u$ and/or~$s^{\psi}$. We conclude this section by
shifting the origin to the desired equilibrium $(z_*,u_*)$ in the LFR
representation.

\begin{lemma}\longthmtitle{Shifted LFR}
  Consider the bilinear-NFL system~\eqref{eq:og_sys} and let $z_*$ be
  an equilibrium when $u=u_*$.  Given the LFR
  representation~\eqref{eq:compact_I-SS-all}, let
  $s^{\psi}_* \in \R^{lk_{\psi}}$ and $s^{\phi}_* \in \R^{k_{\phi}}$
  be the corresponding internal neural network states. Then, with the
  shifted variables $\tilde{z} = z - z_*$, $\tilde{u} = u - u_*$,
  $\tilde{s}^{\psi} = s^{\psi} - s^{\psi}_*$, and
  $\tilde{s}^{\phi} = s^{\phi} - s^{\phi}_*$, the LFR representation
  takes the form
  \begin{subequations}\label{eq:shifted_LFR-all}
  \begin{equation}
    \hspace{-0.3cm}\begin{bmatrix}
      \tilde{z}^+ \\
      \tilde{u} \\
      \tilde{s}^{\psi} \\
      \tilde{v}^{\phi} \\
      \tilde{v}^{\psi}
    \end{bmatrix}
    \hspace{-0.1cm} = \hspace{-0.1cm}
    \begin{bmatrix}
      \mathcal{A} & \mathcal{B} & \tilde{D} + J^{\psi} & H^{\psi} & H^{\phi} & H^{\psi}_*\\
      0 & I & 0 & 0 & 0 & 0\\
      0 & 0 & 0 & 0 & 0 & I\\
      0 & G^{\phi} & 0 & 0 & F^{\phi} & 0 \\
      0 & G^{\psi} & 0 & 0 & F^{\psi} & 0  
    \end{bmatrix}    
    \begin{bmatrix}
      \tilde{z} \\
      \tilde{u} \\
      \tilde{w}_u \\
      \tilde{w}_{\psi} \\
      \tilde{s}^{\phi} \\
      \tilde{s}^{\psi} 
    \end{bmatrix} ,
    \label{eq:shifted_LFR}
  \end{equation}
  with the outputs nonlinearly fed-back as
  \begin{equation}\label{eq:shifted_LFR-fedback}
    \begin{bmatrix}
      \tilde{w}_u \\
      \tilde{w}_{\psi} \\
      \tilde{s}^{\phi} \\
      \tilde{s}^{\psi} 
    \end{bmatrix}
    = 
    \begin{bmatrix}
      (\tilde{z} \otimes I_m) \tilde{u}
      \\
      \diag_l (\tilde{z} \otimes I_{k_{\psi}}) \tilde{s}^{\psi}
      \\
      {\beta}^{\phi} \,  (\tilde{v}^{\phi}) \\
      {\beta}^{\psi} \,  (\tilde{v}^{\psi})
    \end{bmatrix}   
    =
    \begin{bmatrix}
      \Delta_m \, \tilde{u} \\
      \diag_l(\Delta_{k_{\psi}}) \, \tilde{s}^{\psi} \\
      \Delta_{\phi} \, \tilde{v}^{\phi} \\
      \Delta_{\psi} \, \tilde{v}^{\psi}       
    \end{bmatrix} ,
  \end{equation}
\end{subequations}
where $v^{\phi}_* = F^{\phi} s^{\phi}_* + G^{\phi} u_* + b^{\phi}_x$,
$v^{\psi}_* = F^{\psi} s^{\psi}_* + G^{\psi} u_* + b^{\psi}_x$,
${\beta}^{\phi}(\tilde{v}^{\phi}) = \xi(\tilde{v}^{\phi} + v^{\phi}_*)
- \xi(v^{\phi}_*)$, and
${\beta}^{\psi}(\tilde{v}^{\psi}) = \xi(\tilde{v}^{\psi} + v^{\psi}_*)
- \xi(v^{\psi}_*)$ are the shifted neural network
nonlinearities. Here,
$\mathcal{A} = A_0 + B^{\psi}_y + A^*_u + A^*_{s^\psi}$,
$\mathcal{B} = B_0 + J^{\phi} + (\tilde{D} + J^{\psi}) (z_* \otimes
I_m)$,
$H^{\psi}_* = H^{\psi} \text{diag}_l(z_* \otimes I_{k_{\psi}})$, and
  \begin{align*}
    A^*_{s^\psi}
    &= H^{\psi} [
      \diag_l(e_1 \otimes I_{k_{\psi}}) s^{\psi}_* \, \dots \, \diag_l(e_l \otimes I_{k_{\psi}}) s^{\psi}_* 
      ] ,
    \\
    A^*_u
    &= (\tilde{D} + J^{\psi})[
      (e_1 \otimes I_m) u_* \,   \dots \, (e_l \otimes
      I_m)u_* ] . \eqoprocend
  \end{align*} 
\end{lemma}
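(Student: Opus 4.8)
The plan is to substitute the shifted variables directly into the LFR~\eqref{eq:compact_I-SS-all} and then regroup the resulting terms according to their dependence on $\tilde z$, on $\tilde u$, on the fed-back signals, and on constants. First I would fix the internal equilibrium states: since $u_*$ determines the internal INN variables through the implicit equations in~\eqref{eq:phi_implicit}, whose well-posedness is guaranteed by back substitution, there exist unique $s^{\phi}_*\in\R^{k_{\phi}}$ and $s^{\psi}_*\in\R^{lk_{\psi}}$ solving them at $u=u_*$; put $v^{\phi}_* = F^{\phi}s^{\phi}_* + G^{\phi}u_* + b^{\phi}_x$ and $v^{\psi}_* = F^{\psi}s^{\psi}_* + G^{\psi}u_* + b^{\psi}_x$. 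Writing $z=\tilde z+z_*$, $u=\tilde u+u_*$, $s^{\phi}=\tilde s^{\phi}+s^{\phi}_*$, $s^{\psi}=\tilde s^{\psi}+s^{\psi}_*$, I would expand each feedback relation in~\eqref{eq:LFR_fedback} using bilinearity of the Kronecker product. For the bilinear term, $(z\otimes I_m)u = (\tilde z\otimes I_m)\tilde u + (\tilde z\otimes I_m)u_* + (z_*\otimes I_m)\tilde u + (z_*\otimes I_m)u_*$, where $(\tilde z\otimes I_m)\tilde u=\Delta_m\tilde u=:\tilde w_u$, the term $(\tilde z\otimes I_m)u_* = [(e_1\otimes I_m)u_*\ \cdots\ (e_l\otimes I_m)u_*]\,\tilde z$ is linear in $\tilde z$ (its coefficient premultiplied by $\tilde D+J^{\psi}$ being $A^*_u$), $(z_*\otimes I_m)\tilde u$ is linear in $\tilde u$, and $(z_*\otimes I_m)u_*$ is constant. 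An identical expansion of $\diag_l(z\otimes I_{k_{\psi}})s^{\psi}$ splits it into $\diag_l(\Delta_{k_{\psi}})\tilde s^{\psi}=:\tilde w_{\psi}$, a term linear in $\tilde z$ whose coefficient premultiplied by $H^{\psi}$ is exactly $A^*_{s^{\psi}}$, the term $\diag_l(z_*\otimes I_{k_{\psi}})\tilde s^{\psi}$, which becomes a genuinely new feedback channel with coefficient $H^{\psi}_* = H^{\psi}\diag_l(z_*\otimes I_{k_{\psi}})$ after premultiplication by $H^{\psi}$, and a constant. Finally, with $\tilde v^{\phi}:=F^{\phi}\tilde s^{\phi}+G^{\phi}\tilde u$ one has $\tilde s^{\phi} = \xi(\tilde v^{\phi}+v^{\phi}_*)-\xi(v^{\phi}_*) = \beta^{\phi}(\tilde v^{\phi})$, and analogously $\tilde s^{\psi}=\beta^{\psi}(\tilde v^{\psi})$, which define the shifted uncertainties $\Delta_{\phi},\Delta_{\psi}$ via $\tilde s^{\phi}=\Delta_{\phi}\tilde v^{\phi}$, $\tilde s^{\psi}=\Delta_{\psi}\tilde v^{\psi}$.

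Next I would substitute these expansions into the state-update row $z^+ = (A_0+B^{\psi}_y)z + (B_0+J^{\phi})u + (\tilde D+J^{\psi})w_u + H^{\psi}w_{\psi} + H^{\phi}s^{\phi} + b^{\phi}_y$ obtained in the proof of Proposition~\ref{thm:LFR_reformulation}, and collect by type. The coefficients of $\tilde z$ sum to $A_0+B^{\psi}_y+A^*_u+A^*_{s^{\psi}}=\mathcal{A}$, those of $\tilde u$ to $B_0+J^{\phi}+(\tilde D+J^{\psi})(z_*\otimes I_m)=\mathcal{B}$, and the remaining nonconstant terms are $(\tilde D+J^{\psi})\tilde w_u$, $H^{\psi}\tilde w_{\psi}$, $H^{\phi}\tilde s^{\phi}$ and $H^{\psi}_*\tilde s^{\psi}$, which reproduce the first block row of~\eqref{eq:shifted_LFR}. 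The remaining rows are immediate: $\tilde u$ and $\tilde s^{\psi}$ are trivial pass-throughs, while the rows for $v^{\phi},v^{\psi}$ become $\tilde v^{\phi}=F^{\phi}\tilde s^{\phi}+G^{\phi}\tilde u$ and $\tilde v^{\psi}=F^{\psi}\tilde s^{\psi}+G^{\psi}\tilde u$, the biases $b^{\phi}_x,b^{\psi}_x$ having been absorbed into $v^{\phi}_*,v^{\psi}_*$.

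To finish, I would check that no bias survives in the state row: the constant terms left over are $(A_0+B^{\psi}_y)z_* + (B_0+J^{\phi})u_* + (\tilde D+J^{\psi})(z_*\otimes I_m)u_* + H^{\psi}\diag_l(z_*\otimes I_{k_{\psi}})s^{\psi}_* + H^{\phi}s^{\phi}_* + b^{\phi}_y$, which by the identities~\eqref{eq:Psi_expr} and~\eqref{eq:phi_implicit} is precisely the right-hand side of~\eqref{eq:og_sys} evaluated at $(z_*,u_*)$; since $z_*$ is an equilibrium this equals $z_*$ and therefore cancels in $\tilde z^+=z^+-z_*$. Collecting everything yields~\eqref{eq:shifted_LFR-all}.

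I expect the main obstacle to be purely organizational bookkeeping: correctly attributing each of the four summands produced by expanding a given Kronecker product either to the new state matrix $\mathcal{A}$, to the new input matrix $\mathcal{B}$, to a new feedback column, or to the constant pile. The one nontrivial structural observation is that $(z_*\otimes I_m)\tilde u$ merely augments $\mathcal{B}$, because $\tilde u$ is already an LFR input, whereas $\diag_l(z_*\otimes I_{k_{\psi}})\tilde s^{\psi}$ forces the introduction of the extra channel $H^{\psi}_*$, because $\tilde s^{\psi}$ did not previously appear multiplied by $H^{\psi}$. The equilibrium hypothesis on $(z_*,u_*)$ enters only in the last step, to make the constant terms collapse to zero.
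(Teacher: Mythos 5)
Your proposal is correct and follows exactly the route the paper intends: the paper gives no written proof beyond the remark that it ``readily follows by shifting the origin,'' and your substitution-and-regrouping argument is precisely that computation carried out in full, including the two points that actually require care (the cross terms $(\tilde z\otimes I_m)u_*$ and $\diag_l(\tilde z\otimes I_{k_\psi})s^{\psi}_*$ becoming the linear-in-$\tilde z$ contributions $A^*_u$, $A^*_{s^\psi}$, and the term $\diag_l(z_*\otimes I_{k_\psi})\tilde s^{\psi}$ generating the new channel $H^{\psi}_*$), as well as the cancellation of the constant pile via the equilibrium condition. No gaps; your reading of the fifth block row as $\tilde v^{\psi}=F^{\psi}\tilde s^{\psi}+G^{\psi}\tilde u$ is the dimensionally consistent one.
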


The proof readily follows by shifting the origin. In what follows, we
drop the tilde arguments for ease of notation. 

\subsection{Uncertainty Set Description}\label{sec:unc_set_desc}

Here, we complement the LFR formulation~\eqref{eq:shifted_LFR-all} by
identifying uncertainty sets $\mathbf{\Delta_m}$,
$\mathbf{\Delta_{k_{\psi}}}$, $\mathbf{\Delta_{\phi}}$, and
$\mathbf{\Delta_\psi}$ for the matrices $\Delta_m $,
$\Delta_{k_{\psi}}$, $\Delta_{\phi}$, and $\Delta_{\psi}$,
respectively, using quadratic constraints.  The matrices
$\Delta_{\phi} \in \mathbf{\Delta_{\phi}}$ and
$\Delta_{\psi} \in \mathbf{\Delta_{\psi}}$ represent the neural
network nonlinearities and share the same structure. Similarly,
matrices $\Delta_m \in \mathbf{\Delta_m}$ and
$\Delta_{k_{\psi}} \in \mathbf{\Delta_{k_{\psi}}}$ represent the
bilinearities and have the same structure but different sizes.

Using Lemma~\ref{lemma:NN_nonlinearities}, for all
$T \in \mathcal{T}_{l{k_{\psi}}}$ the shifted nonlinearity
${\beta}^{\psi}({v}^{\psi}) = \xi({v}^{\psi} + v^{\psi}_*) -
\xi(v^{\psi}_*)$ satisfies
%
\begin{equation}
  \begin{bmatrix}
    \star
  \end{bmatrix}^T 
  \begin{bmatrix} 
    -2T & (\alpha + \beta) T \\ 
    (\alpha + \beta) T & -2\alpha \beta T
  \end{bmatrix} 
  \begin{bmatrix} 
    {\beta}^{\psi}({v}^{\psi}) \\
    v^{\psi}
  \end{bmatrix} \geq 0  , \; \forall v^{\psi} \in  \R^{l k_{\psi}}.
\end{equation}
Therefore, we define the neural network uncertainty set as
  \begin{align} \label{eq:Delta_psi_eqn}
    \mathbf{\Delta}_{\psi}
    &= \Big\{
      \Delta_{\psi} \in \R^{lk_{\psi} \times lk_{\psi}} \, \Big| \,\, \forall \,\, T
      \in \mathcal{T}_{lk_{\psi}} , \notag \\  
    &\begin{bmatrix}
       \Delta_{\psi} \\
       I
     \end{bmatrix}^T 
      \begin{bmatrix}
        -2T & (\alpha + \beta)T \\
        (\alpha + \beta)T & -2 \alpha \beta T
      \end{bmatrix}
      \begin{bmatrix}
        \Delta_{\psi} \\
        I
      \end{bmatrix} \geq 0    
      \Big \}. 
  \end{align}  
$\mathbf{\Delta_{\phi}} \subset \R^{k_{\phi} \times k_{\phi}}$ is defined similarly. 
To tackle the bilinear terms, we rely on the
following result.

\begin{lemma}[Bilinearity as an Uncertainty~\cite{RS-JB-FA:23}]\label{le:bilinear-uncertainty}
  Let $\mathcal{Z}$ as defined in~\eqref{eq:Z_defn}.  For any
  $m \in \mathbb{Z}$, define the uncertainty set {\small
  \begin{align} \label{eq:Delta_m} 
    \mathbf{\Delta}_m
    &= \Big\{\Delta \in \R^{m l \times m} \, \Big| \,\, \forall \,\,
      0 \preceq \Lambda_m \in \R^{m \times m}, \notag
    \\
    &\quad
      \begin{bmatrix}
        \Delta \\ 
        I
      \end{bmatrix}^T 
      \begin{bmatrix}
        Q_z \otimes \Lambda_m & S_z \otimes \Lambda_m \\ 
        S_z^T \otimes \Lambda_m & R_z \otimes \Lambda_m
      \end{bmatrix} 
      \begin{bmatrix}
        \Delta \\ 
        I
      \end{bmatrix} \, \succeq 0
      \Big\}.        
  \end{align}         
  }
  Then, $\Delta \in \mathbf{\Delta}_m$ iff
  $\exists \, z \in \mathcal{Z}$ such that $\Delta = z \otimes I_m$. \oprocend
  %
  %
\end{lemma}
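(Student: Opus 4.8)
## Proof Proposal

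The plan is to prove the equivalence by establishing both directions, using the structure of the Kronecker product to reduce the matrix quadratic constraint in~\eqref{eq:Delta_m} to the scalar quadratic constraint defining~$\mathcal{Z}$ in~\eqref{eq:Z_defn}. The key observation is that a matrix $\Delta \in \R^{ml \times m}$ has the form $z \otimes I_m$ for some $z \in \R^l$ if and only if $\Delta$ has the block structure $\Delta = \text{vec}[z_1 I_m, \dots, z_l I_m]$, i.e., each $m \times m$ block is a scalar multiple of the identity. So the argument has two layers: first, characterize when $\Delta$ has this special block-scalar form, and second, show that the scalar $z$ thereby extracted lies in $\mathcal{Z}$.

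For the ``if'' direction, suppose $z \in \mathcal{Z}$ and $\Delta = z \otimes I_m$. Then for any $0 \preceq \Lambda_m \in \R^{m \times m}$, I would compute
\begin{equation*}
  \begin{bmatrix} \Delta \\ I \end{bmatrix}^T
  \begin{bmatrix} Q_z \otimes \Lambda_m & S_z \otimes \Lambda_m \\ S_z^T \otimes \Lambda_m & R_z \otimes \Lambda_m \end{bmatrix}
  \begin{bmatrix} \Delta \\ I \end{bmatrix}
  = \begin{bmatrix} z \otimes I_m \\ I_m \end{bmatrix}^T
  \left( \begin{bmatrix} Q_z & S_z \\ S_z^T & R_z \end{bmatrix} \otimes \Lambda_m \right)
  \begin{bmatrix} z \otimes I_m \\ I_m \end{bmatrix}.
\end{equation*}
Writing $\begin{bsmallmatrix} z \otimes I_m \\ I_m \end{bsmallmatrix} = \begin{bsmallmatrix} z \\ 1 \end{bsmallmatrix} \otimes I_m$ and using the mixed-product property $(A \otimes B)(C \otimes D) = (AC) \otimes (BD)$, this collapses to $\left( \begin{bsmallmatrix} z \\ 1 \end{bsmallmatrix}^T \begin{bsmallmatrix} Q_z & S_z \\ S_z^T & R_z \end{bsmallmatrix} \begin{bsmallmatrix} z \\ 1 \end{bsmallmatrix} \right) \otimes \Lambda_m$, which is a nonnegative scalar (by $z \in \mathcal{Z}$) times $\Lambda_m \succeq 0$, hence $\succeq 0$. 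This shows $\Delta \in \mathbf{\Delta}_m$.

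For the ``only if'' direction, suppose $\Delta \in \mathbf{\Delta}_m$. The main work is to show $\Delta$ must have the block-scalar form. Here I would exploit the freedom in choosing $\Lambda_m$: the constraint must hold for \emph{all} $\Lambda_m \succeq 0$, in particular for rank-one choices $\Lambda_m = \eta \eta^T$ with $\eta \in \R^m$ arbitrary. Partitioning $\Delta$ into its $l$ blocks $\Delta = \text{vec}[D_1, \dots, D_l]$ with $D_i \in \R^{m \times m}$, testing the constraint against vectors of the form $v \otimes \eta$ (for $v \in \R^l$, $\eta \in \R^m$) after substituting $\Lambda_m = \eta\eta^T$ should reduce the matrix inequality to a scalar one in which the quantity $\eta^T D_i^T D_j \eta$ and $\eta^T D_i \eta$ appear; matching this against what the $z \otimes I_m$ form would give, and using that $Q_z \prec 0$ (so the ``curvature'' in the $\Delta$-block is negative definite), one forces $D_i = z_i I_m$ for scalars $z_i$. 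Then the residual scalar constraint is exactly $\begin{bsmallmatrix} z \\ 1 \end{bsmallmatrix}^T \begin{bsmallmatrix} Q_z & S_z \\ S_z^T & R_z \end{bsmallmatrix} \begin{bsmallmatrix} z \\ 1 \end{bsmallmatrix} \geq 0$, i.e., $z \in \mathcal{Z}$.

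The main obstacle I anticipate is the ``only if'' direction — specifically, rigorously extracting the block-scalar structure of $\Delta$ from the quantified inequality. One must be careful that the condition ``for all $\Lambda_m \succeq 0$'' is strong enough: a single $\Lambda_m$ gives only a scalar-weighted trace condition, and it is the combination of arbitrary rank-one $\Lambda_m$ together with arbitrary test vectors that pins down each block. An alternative, cleaner route worth considering is to invoke the corresponding result from~\cite{RS-JB-FA:23} directly (the lemma is attributed there), and only sketch the Kronecker-algebra verification of the ``if'' direction, which is the part actually used downstream in the controller synthesis. Since the statement is cited, a brief proof emphasizing the Kronecker identities with a pointer to the reference for the converse is likely the intended level of detail.
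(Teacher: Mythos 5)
The paper itself gives no proof of this lemma: it is stated with the end-of-statement symbol and deferred entirely to the cited reference, so there is nothing to compare against line by line. Your sketch is correct in both directions, and your Kronecker computation for the ``if'' part (using $\bigl[\begin{smallmatrix} z\otimes I_m \\ I_m\end{smallmatrix}\bigr] = \bigl[\begin{smallmatrix} z\\1\end{smallmatrix}\bigr]\otimes I_m$ and the mixed-product rule) is exactly the right reduction. The step you flag as the obstacle --- extracting the block-scalar form in the ``only if'' direction --- does go through, and can be closed cleanly: write $\Delta = \text{vec}[D_1,\dots,D_l]$, take $\Lambda_m = \eta\eta^T$, and set $w_i = D_i^T\eta$, so the constraint reads
\begin{equation*}
  \sum_{i,j}(Q_z)_{ij}\,w_i w_j^T + \sum_i (S_z)_i\,(w_i\eta^T + \eta w_i^T) + R_z\,\eta\eta^T \succeq 0 .
\end{equation*}
Compressing with the orthogonal projector $P_\perp$ onto $\eta^\perp$ annihilates every term containing $\eta$ and leaves $V Q_z V^T \succeq 0$ with $V = [P_\perp w_1 \,\cdots\, P_\perp w_l]$; since $Q_z \prec 0$ this forces $V=0$, i.e., $D_i^T\eta \in \operatorname{span}(\eta)$ for every $\eta$, hence $D_i = z_i I_m$. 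Substituting back, the condition collapses to $\bigl(\bigl[\begin{smallmatrix} z\\1\end{smallmatrix}\bigr]^T\bigl[\begin{smallmatrix} Q_z & S_z\\ S_z^T & R_z\end{smallmatrix}\bigr]\bigl[\begin{smallmatrix} z\\1\end{smallmatrix}\bigr]\bigr)\,\Lambda_m \succeq 0$ for all $\Lambda_m\succeq 0$, i.e., $z\in\mathcal{Z}$, as you say. So no rank-one test vectors beyond the choice of $\Lambda_m$ are needed; the projection argument does all the work. (One cosmetic caveat, inherited from the paper's statement rather than your proof: $\mathcal{Z}$ in~\eqref{eq:Z_defn} carries a shift by $z_*$, whereas the quadratic constraint in~\eqref{eq:Delta_m} characterizes the unshifted set; the lemma is applied after the origin shift, so this is harmless, but worth noting.)
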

%
%

This result allows us to define the uncertainty sets
$\mathbf{\Delta_m}$ and $\mathbf{\Delta_{k_{\psi}}}$ for the
bilinearities.  Note that Lemma~\ref{le:bilinear-uncertainty} relies
on the fact that $z \in \mathcal{Z}$. This means that, in our
forthcoming controller design, we need to ensure
\revision{$z(t) \in \mathcal{Z}$ for all time~$t$}.
%
%

To conclude, we combine the uncertain matrices into
${\Delta}_{c} = \diag(\Delta_m,\diag_l(\Delta_{k_{\psi}}), \Delta_{\phi}, \Delta_{\psi})$ to
obtain a compact representation of the
uncertainties~\eqref{eq:shifted_LFR-fedback} fed-back in the LFR. Let
$m_c = lm + l^2k_{\psi} + lk_{\psi} + k_{\phi} , \, n_c = 2lk_{\psi} + m + k_{\phi}$ and define 
$q = \text{vec}(w_u,w_{\psi},s^{\phi},s^{\psi}) \in \R^{m_c}$, $p = \text{vec}(u,s^{\phi},v^{\phi},v^{\psi}) \in \R^{n_c}$. Then, we have $q = \Delta_{c} p, \text{ where } \Delta_{c} \in \mathbf{\Delta}_{c}$ is the combined uncertainty set given by
%
%
\begin{equation}
  \hspace{-0.1cm}
  \begin{aligned}
    &\mathbf{\Delta}_{c}
    = 
      \bigg\{
      \Delta_c \in \R^{m_c \times n_c} \, \bigg| \, \forall \,
      \Lambda_m \succeq 0, \, \Lambda_{k_{\psi}} \succeq 0, \, T_{k_{\phi}} \in
      \mathcal{T}_{k_{\phi}},  
    \\
    &
    T_{l k_{\psi}} \in
      \mathcal{T}_{l k_{\psi}}, \,
      \begin{bmatrix}
        \Delta_c
        \\ 
        I_{n_c}
      \end{bmatrix}^T 
      \begin{bmatrix}
        {Q} & {S}
        \\ 
        S^T & {R}
      \end{bmatrix} 
      \begin{bmatrix}
        \Delta_c
        \\ 
        I_{n_c}
      \end{bmatrix} \, \succeq 0 
      \bigg\}.
  \end{aligned}
  \label{eq:Delta_combined}
\end{equation}
The matrices $Q \in \R^{m_c \times m_c}$, $S \in \R^{m_c \times n_c}$, $R \in \R^{n_c \times n_c}$ 
\revision{
  \begin{small}    
    \begin{equation*}
      \begin{aligned}
        Q 
        &= \diag 
          \left[
          \begin{matrix}
            Q_z \otimes \Lambda_m & \hspace{-0.2cm} \diag_l (Q_z \otimes \Lambda_{k_{\psi}}) &\hspace{-0.1cm}  -2 T_{k_{\phi}} &\hspace{-0.1cm} -2 T_{l k_{\psi}}
        \end{matrix}
        \right]    \\
    S
      &= \diag 
        \left[
        \begin{matrix}
        S_z \otimes \Lambda_m &\hspace{-0.2cm}  \diag_l (S_z \otimes \Lambda_{k_{\psi}}) &\hspace{-0.1cm} (\alpha \hspace{-0.07cm}+\hspace{-0.07cm} \beta) T_{k_{\phi}} &\hspace{-0.1cm} (\alpha \hspace{-0.07cm}+\hspace{-0.07cm} \beta) T_{l k_{\psi}}
        \end{matrix}
        \right]    \\
    R 
    &= \diag 
      \left[
      \begin{matrix}
      R_z \otimes \Lambda_m &\hspace{-0.2cm}  \diag_l (R_z \otimes \Lambda_{k_{\psi}}) &\hspace{-0.1cm} -2\alpha \beta T_{k_{\phi}} &\hspace{-0.1cm} -2\alpha \beta T_{l k_{\psi}}
      \end{matrix}
      \right].                
  \end{aligned}
\end{equation*}
  \end{small}
}
%
%
The next result computes the multiplier matrix inverse in the
definition of~$\mathbf{\Delta}_c$, as it is useful later in our
controller design.
\revision{The proof is omitted for space reasons.}

\begin{lemma}\longthmtitle{Multiplier Matrix
    Inverse in Combined Uncertainty Set}
  Given the set $\mathcal{Z}$ in \eqref{eq:Z_defn} and the combined
  uncertainty set $\mathbf{\Delta}_c$ in~\eqref{eq:Delta_combined},
  let
  \begin{equation*}
    \begin{bmatrix}
      \tilde{Q}_z & \tilde{S}_z
      \\ 
      \tilde{S}_z^T & \tilde{R}_z
    \end{bmatrix}
    =
    \begin{bmatrix}
      Q_z & S_z \\ 
      S_z^T & R_z
    \end{bmatrix}^{-1}
    ,
    \quad 
    \begin{bmatrix}
      \tilde{Q} & \tilde{S}
      \\ 
      \tilde{S}^T & \tilde{R}
    \end{bmatrix}
    = 
    \begin{bmatrix}
      Q & S \\ 
      S^T & R
    \end{bmatrix}^{-1}  .
  \end{equation*}
  Then, $\tilde{Q} \in \R^{m_c \times m_c}$, $\tilde{S} \in \R^{m_c \times n_c}$, $\tilde{R} \in \R^{n_c \times n_c}$ are
  \begin{small}    
  \begin{equation*}
    \begin{aligned}
      \tilde{Q} 
      &\hspace{-0.06cm}=\hspace{-0.06cm}\text{diag} 
      \left[
        \begin{matrix}
          \tilde{Q}_z \otimes \tilde{\Lambda}_m &\hspace{-0.2cm} \diag_l (\tilde{Q}_z \otimes \tilde{\Lambda}_{k_{\psi}}) & \hspace{-0.2cm}
          \frac{2 \alpha \beta}{(\alpha \hspace{-0.03cm}-\hspace{-0.03cm} \beta)^2}
          \tilde{T}_{k_{\phi}} &\hspace{-0.2cm} \frac{2 \alpha \beta}{(\alpha \hspace{-0.03cm}-\hspace{-0.03cm} \beta)^2}\tilde{T}_{l k_{\psi}}
        \end{matrix} \right]  
      ,
      \\
      \tilde{S}
      &\hspace{-0.06cm}=\hspace{-0.06cm} \text{diag}
      \left[
        \begin{matrix}
          \tilde{S}_z \otimes \tilde{\Lambda}_m &\hspace{-0.2cm} \diag_l (\tilde{S}_z \otimes \tilde{\Lambda}_{k_{\psi}}) &\hspace{-0.2cm} \frac{\alpha \hspace{-0.03cm}+\hspace{-0.03cm} \beta}{(\alpha \hspace{-0.03cm}-\hspace{-0.03cm} \beta)^2}\tilde{T}_{k_{\phi}} &\hspace{-0.2cm} \frac{\alpha \hspace{-0.03cm}+\hspace{-0.03cm} \beta}{(\alpha \hspace{-0.03cm}-\hspace{-0.03cm} \beta)^2}\tilde{T}_{l k_{\psi}}
        \end{matrix} \right]  
      ,
      \\
      \tilde{R}
      &\hspace{-0.06cm}=\hspace{-0.06cm} \text{diag}
      \left[
      \begin{matrix}
          \tilde{R}_z \otimes \tilde{\Lambda}_m &\hspace{-0.2cm} \diag_l (\tilde{R}_z \otimes \tilde{\Lambda}_{k_{\psi}}) &\hspace{-0.2cm} \frac{2}{(\alpha \hspace{-0.03cm}-\hspace{-0.03cm} \beta)^2}\tilde{T}_{k_{\phi}} &\hspace{-0.2cm} \frac{2}{(\alpha \hspace{-0.03cm}-\hspace{-0.03cm} \beta)^2}\tilde{T}_{l k_{\psi}}
            \end{matrix} \right] ,
    \end{aligned}        
  \end{equation*}
  \end{small}where $\tilde{\Lambda}_m,\tilde{\Lambda}_{k_{\psi}},\tilde{T}_{k_{\phi}},\tilde{T}_{l k_{\psi}}$ are the inverses of ${\Lambda}_m,{\Lambda}_{k_{\psi}},{T}_{k_{\phi}},{T}_{k_{l \psi}}$. \oprocend
\end{lemma}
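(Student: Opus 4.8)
The plan is to exploit the observation that the multiplier matrix $\bigl[\begin{smallmatrix} Q & S \\ S^T & R\end{smallmatrix}\bigr]$ assembled in~\eqref{eq:Delta_combined} is, up to a symmetric permutation, block diagonal with one block per uncertainty channel, so that its inverse can be computed channel by channel. First I would introduce a permutation matrix $P$ that, for each of the four channels associated with $\Delta_m$, $\diag_l(\Delta_{k_\psi})$, $\Delta_\phi$ and $\Delta_\psi$, groups the rows and columns indexing that channel's ``output'' block together with those indexing its ``input'' block, so that $P\,\bigl[\begin{smallmatrix} Q & S \\ S^T & R\end{smallmatrix}\bigr]\,P^T = \diag[\,M_m,\ M_{k_\psi},\ M_\phi,\ M_\psi\,]$, where $M_m = \bigl[\begin{smallmatrix} Q_z\otimes\Lambda_m & S_z\otimes\Lambda_m \\ S_z^T\otimes\Lambda_m & R_z\otimes\Lambda_m\end{smallmatrix}\bigr]$ is exactly the multiplier of $\mathbf{\Delta}_m$ in~\eqref{eq:Delta_m}, $M_{k_\psi}$ is its $\diag_l$-repeated version with $\Lambda_m$ replaced by $\Lambda_{k_\psi}$, and $M_\phi$, $M_\psi$ are the neural-network multipliers $\bigl[\begin{smallmatrix} -2T & (\alpha+\beta)T \\ (\alpha+\beta)T & -2\alpha\beta T\end{smallmatrix}\bigr]$ with $T = T_{k_\phi}$, resp.\ $T = T_{lk_\psi}$. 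Since $(PAP^T)^{-1} = PA^{-1}P^T$ and the inverse of a block-diagonal matrix is the block diagonal of the inverses, it then suffices to invert each $M_\bullet$ separately and undo~$P$; undoing $P$ reproduces precisely the interleaved four-block layout claimed for $\tilde Q$, $\tilde S$, $\tilde R$.

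For the two bilinear channels I would use the block--Kronecker identity $\bigl[\begin{smallmatrix} A_{11} & A_{12} \\ A_{21} & A_{22}\end{smallmatrix}\bigr]\otimes B = \bigl[\begin{smallmatrix} A_{11}\otimes B & A_{12}\otimes B \\ A_{21}\otimes B & A_{22}\otimes B\end{smallmatrix}\bigr]$ to recognize $M_m = \bigl[\begin{smallmatrix} Q_z & S_z \\ S_z^T & R_z\end{smallmatrix}\bigr]\otimes\Lambda_m$, whence $(A\otimes B)^{-1} = A^{-1}\otimes B^{-1}$ gives $M_m^{-1} = \bigl[\begin{smallmatrix}\tilde Q_z & \tilde S_z \\ \tilde S_z^T & \tilde R_z\end{smallmatrix}\bigr]\otimes\tilde\Lambda_m$, i.e.\ the blocks $\tilde Q_z\otimes\tilde\Lambda_m$, $\tilde S_z\otimes\tilde\Lambda_m$, $\tilde R_z\otimes\tilde\Lambda_m$. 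Applying the same computation to each of the $l$ identical sub-blocks of $M_{k_\psi}$, after a perfect-shuffle permutation relating the ``block matrix of $\diag_l$'s'' and ``$\diag_l$ of block matrices'' layouts, yields the $\diag_l(\tilde Q_z\otimes\tilde\Lambda_{k_\psi})$, $\diag_l(\tilde S_z\otimes\tilde\Lambda_{k_\psi})$, $\diag_l(\tilde R_z\otimes\tilde\Lambda_{k_\psi})$ blocks. For the two neural-network channels I would factor $M_\phi = \bigl[\begin{smallmatrix} -2 & \alpha+\beta \\ \alpha+\beta & -2\alpha\beta\end{smallmatrix}\bigr]\otimes T_{k_\phi}$; the $2\times 2$ factor has determinant $4\alpha\beta - (\alpha+\beta)^2 = -(\alpha-\beta)^2$, so its inverse is $\tfrac{1}{(\alpha-\beta)^2}\bigl[\begin{smallmatrix} 2\alpha\beta & \alpha+\beta \\ \alpha+\beta & 2\end{smallmatrix}\bigr]$, giving the blocks $\tfrac{2\alpha\beta}{(\alpha-\beta)^2}\tilde T_{k_\phi}$, $\tfrac{\alpha+\beta}{(\alpha-\beta)^2}\tilde T_{k_\phi}$, $\tfrac{2}{(\alpha-\beta)^2}\tilde T_{k_\phi}$, and identically for $M_\psi$ with $T_{lk_\psi}$.

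The last step is to collect the $(1,1)$, $(1,2)$ and $(2,2)$ blocks across the four inverted channels and undo $P$ to read off $\tilde Q$, $\tilde S$ and $\tilde R$ in the stated block-diagonal form. Invertibility of $\bigl[\begin{smallmatrix} Q_z & S_z \\ S_z^T & R_z\end{smallmatrix}\bigr]$, and hence of the full multiplier, follows from $Q_z\prec 0$ and $R_z\succ 0$ via the Schur complement $R_z - S_z^TQ_z^{-1}S_z\succ 0$, together with $\alpha\neq\beta$ and the standing invertibility of $\Lambda_m$, $\Lambda_{k_\psi}$, $T_{k_\phi}$, $T_{lk_\psi}$ under which the statement is phrased. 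I expect the only delicate point to be the bookkeeping of the two permutation layers --- the outer one decoupling the four channels and the perfect-shuffle inside the $\diag_l$ block --- and verifying that undoing them is consistent with the orderings $q = \text{vec}(w_u, w_\psi, s^\phi, s^\psi)$ and $p = \text{vec}(u, s^\phi, v^\phi, v^\psi)$ fixed in the combined LFR; the remaining Kronecker algebra is routine.
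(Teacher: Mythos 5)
Your proof is correct: the multiplier in~\eqref{eq:Delta_combined} is indeed a permutation of $\diag[M_m, M_{k_\psi}, M_\phi, M_\psi]$, each channel block factors as a Kronecker product whose left factor is the $2\times 2$ (block) multiplier, and the determinant computation $4\alpha\beta-(\alpha+\beta)^2=-(\alpha-\beta)^2$ yields exactly the stated coefficients. The paper omits its own proof for space, and your channel-by-channel Kronecker-inversion argument is precisely the intended one; the only care needed is the permutation bookkeeping you already flag, which is consistent with the orderings of $q$ and $p$.
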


\subsection{Controller Synthesis}\label{sec:con_syn}

Having successfully reformulated the bilinear-NFL system as an LFR,
with associated uncertainty set descriptions, here we tackle the
controller design. We parameterize our design by a linear combination
of all the inputs to the LFR \eqref{eq:shifted_LFR-all} as,
\begin{align*}
  u = K_z \, z + K_u \, w_u + K_{w_{\psi}} \, w_{\psi} + K_{\phi} \,
  s^{\phi} + K_{\psi} s^{\psi}, 
\end{align*}
where $K_z \in \R^{m \times l}$, $K_u \in \R^{m \times lm}$,
$K_{w_{\psi}} \in \R^{m \times l^2k_{\psi}}$,
$K_{s^{\phi}} \in \R^{m \times k_{\phi}}$, and
$K_{s^{\psi}} \in \R^{m \times lk_{\psi}}$. This leads to
\begin{small}  
  \begin{align}
    \label{eq:controller_implicit}
    &u = K_zz \hspace{-0.08cm}+\hspace{-0.08cm} K_u  (z \hspace{-0.04cm}\otimes\hspace{-0.04cm}
      I_m) u \hspace{-0.08cm}+\hspace{-0.08cm} K_{w_{\psi}} \diag_l(z \hspace{-0.04cm}\otimes\hspace{-0.04cm} I_{k_{\psi}}) s^{\psi} 
      \hspace{-0.08cm}+\hspace{-0.08cm} K_{\phi} s^{\phi}
      \hspace{-0.08cm}+\hspace{-0.08cm} K_{\psi} s^{\psi}, \notag
    \\
    &s^{\#} = \xi(F^{\#} s^{\#} + G^{\#} u + v^{\#}_*) - \xi(v^{\#}_*), \text{where $\# \in \{\phi,\psi \}$}.
  \end{align}
\end{small}
The controller is obtained by solving the implicit equation \eqref{eq:controller_implicit}, is nonlinear in the state and depends on the
neural network weights. Using~\eqref{eq:controller_implicit}, the
closed-loop LFR is
\begin{equation}\label{eq:LFR_closed_loop}
    \begin{bmatrix}
      z^+ \\
      p 
    \end{bmatrix}
    = 
      \begin{bmatrix}
        A & B
        \\
        C & D  
      \end{bmatrix}
      \,
      \begin{bmatrix}
        z \\
        q
      \end{bmatrix} ,\,       
    q = \Delta_{c} \, p , \quad \Delta_{c} \in \mathbf{\Delta}_{c} ,        
\end{equation}
where $A = \mathcal{A} + \mathcal{B}K_z$,
$B = [ (\tilde{D} + J^{\psi}+\mathcal{B} K_u) \,|\, (H^{\psi} + \mathcal{B} K_{w_{\psi}}) \,|\, (H^{\phi} + \mathcal{B} K_{\phi}) \,|\, (H^{\psi} \diag_l{z_* \otimes I_{k_{\psi}}} + \mathcal{B} K^{\psi})] $,
\begin{small}
\begin{equation*}
  C \hspace{-0.1cm}=\hspace{-0.1cm}
  \left[
  \begin{matrix}\hspace{-0.1cm}
    K_z \\ 
    0 \\ 
    G^{\phi} \, K_z \\ 
    G^{\psi} \, K_z
  \hspace{-0.1cm}\end{matrix} 
  \right]\hspace{-0.1cm}, 
  D \hspace{-0.1cm}=\hspace{-0.1cm}
  \left[
  \begin{matrix}\hspace{-0.1cm}
    K_u &\hspace{-0.2cm} K_{w_{\psi}} &\hspace{-0.2cm} K_{\phi} &\hspace{-0.3cm} K_{\psi} \\
    0 &\hspace{-0.2cm} 0 &\hspace{-0.2cm} 0 &\hspace{-0.3cm}I \\
    G^{\phi} K_u &\hspace{-0.2cm} G^{\phi}  K_{w_{\psi}} &\hspace{-0.2cm} F^{\phi} + G^{\phi} K_{\phi} &\hspace{-0.3cm} G^{\phi} K_{\psi} \\
    G^{\psi} K_u &\hspace{-0.2cm} G^{\psi}  K_{w_{\psi}} &\hspace{-0.2cm} G^{\psi} K_{\phi} &\hspace{-0.3cm} F^{\psi} + G^{\psi} K_{\psi}
  \hspace{-0.05cm}\end{matrix}
  \right].
\end{equation*}
\end{small}
The controller synthesis problem is then to find matrices $K_z$,
$K_u$, $K_{w_{\psi}}$, $K_{\phi}$, and $K_{\psi}$ such that the
closed-loop system \revision{is well-posed,} locally exponentially
stable, and its trajectories stay in $\mathcal{Z}$. The next result
lays out LMIs for control synthesis.
\begin{theorem}[Controller Synthesis LMIs]\label{th:controller-LMIs}
  \revision{Let $\xi:\R \to \R$ be slope-restricted on
    $[\alpha,\beta]$, with $\alpha \beta < 0$.}
  %
  %
  Assume there exist $L_z \in \R^{m \times l}$,
  $L_u \in \R^{m \times lm}$,
  $L_{w_{\psi}} \in \R^{m \times
    l^2k_{\psi}}$,$L_{\phi} \in \R^{m \times k_{\phi}}$
  $L_{\psi} \in \R^{m \times l k_{\psi}}$, symmetric
  $0 \prec P \in \R^{l \times l}$,
  $0 \prec \tilde{\Lambda}_m \in \R^{m \times m}$,
  $0 \prec \tilde{\Lambda}_{k_{\psi}} \in \R^{k_{\psi} \times
    k_{\psi}}$, $\tilde{T}_{k_{\phi}} \in \mathcal{T}_{k_{\phi}}$,
  $\tilde{T}_{l k_{{\psi}}} \in \mathcal{T}_{lk_{\psi}}$ and a scalar
  $\nu > 0$ such that the following LMIs hold
  \begin{equation}
    \hspace{-0.2cm}
    \left[
      \begin{array}{cc|cc}
        \multicolumn{2}{c|}{\multirow{2}{*}{$Q_{\inte}$}} & X_{AP}
        &\hspace{-0.2cm} X_{B\tilde{Q}}
        \\  
        &  & X_{CP} &\hspace{-0.2cm} X_{D\tilde{Q}} \\ \hline
        \star & \star & P &\hspace{-0.2cm} 0 \\ 
        \star & \star & 0 &\hspace{-0.2cm} -{\tilde{Q}}
    \end{array}
    \right] \succ 0 , \,
    \begin{bmatrix}
        P+\nu \tilde{Q}_z & \hspace{-0.3cm} -\nu \tilde{S}_z \\
        \star & \hspace{-0.3cm} \nu \tilde{R}_z - 1
    \end{bmatrix} \preceq 0
    \label{eq:final_LMI}
    \end{equation}    
    where $Q_{\inte}$ is given by
    \begin{equation}
      Q_{\inte} = 
      \begin{bmatrix}
        P & -X_{BL} \, S_R \\
        \star & \tilde{R} - X_{DL} \, S_R - (X_{DL} \, S_R)^T
      \end{bmatrix}
    \end{equation}
    and ${\tilde{S}} = S_L \, S_R$ is factorized with
    $\widehat{S}_z = \tilde{Q}_z^{-1} \, \tilde{S}_z$ as
\begin{align*}
  S_L &\hspace{-0.08cm}=\hspace{-0.08cm} \diag[\, \tilde{Q}_z \otimes \tilde{\Lambda}_m,
        \diag_l \, (\tilde{Q}_z \otimes \tilde{\Lambda}_{k_{\psi}}),
  \tilde{T}_{k_{\phi}}, \, \tilde{T}_{l k_{\psi}}                 
  ] \\
  S_R &\hspace{-0.08cm}=\hspace{-0.08cm} \diag[
    \widehat{S}_z \hspace{-0.08cm}\otimes\hspace{-0.08cm} I_m ,
    \diag_l (\widehat{S}_z \hspace{-0.08cm}\otimes\hspace{-0.08cm} I_{k_{\psi}}),
    \frac{\alpha \hspace{-0.08cm}+\hspace{-0.08cm} \beta}{(\alpha \hspace{-0.08cm}-\hspace{-0.08cm} \beta)^2}I_{k_{\phi}}, \frac{\alpha \hspace{-0.08cm}+\hspace{-0.08cm} \beta}{(\alpha \hspace{-0.08cm}-\hspace{-0.08cm} \beta)^2}I_{l k_{\psi}}
  ]
\end{align*}
and where
\begin{small}  
\begin{align*}
  &X_{DL} \hspace{-0.08cm}=\hspace{-0.08cm} 
  \left[  \begin{matrix}
    L_u &\hspace{-0.08cm} L_{w_{\psi}} &\hspace{-0.08cm} L_{\phi} &\hspace{-0.08cm} L_{\psi} \\
      0 &\hspace{-0.08cm} 0 &\hspace{-0.08cm} 0 &\hspace{-0.08cm} \tilde{T}_{l k_{\psi}} \\
      G^{\phi}L_u &\hspace{-0.08cm} G^{\phi} L_{w_{\psi}} &\hspace{-0.08cm} G^{\phi}L_{\phi} + F^{\phi} \tilde{T}_{k_{\phi}} &\hspace{-0.08cm} G^{\phi} L_{\psi} \\
      G^{\psi}L_u &\hspace{-0.08cm} G^{\psi} L_{w_{\psi}} &\hspace{-0.08cm} G^{\psi}L_{\phi} &\hspace{-0.08cm} G^{\psi} L_{\psi} + F^{\psi} \tilde{T}_{lk_{\psi}}      
      \end{matrix} \right] \\
    &X_{D\tilde{Q}} = X_{DL} \, \diag \left[ I \,\,\, I \,\,\, \frac{2\alpha \beta}{(\alpha - \beta)^2}I \,\,\, \frac{2\alpha \beta}{(\alpha - \beta)^2}I  \right] \\
  &X_{AP} = \mathcal{A} P + \mathcal{B} L_z \,, \quad 
  X_{CP} = \text{vec} \left[  
  L_z \,|\, 0  \,|\, G^{\phi} L_z \,|\, G^{\psi} L_z   \right] \\
  &X_{BL} \hspace{-0.08cm}=\hspace{-0.08cm} \left[  
    \tilde{D}+J^{\psi} \,|\,H^{\psi} \,|\, H^{\phi} \,|\, H^{\psi}_*
     \right]
    S_L \hspace{-0.08cm}+\hspace{-0.08cm} 
  \left[  
    \mathcal{B} L_u \,|\, \mathcal{B} L_{w_{\psi}} \,|\, \mathcal{B} L_{\phi} \,|\, \mathcal{B} L_{\psi}
   \right]  \\
  &X_{B\tilde{Q}} =  X_{BL} \, \diag \left[ I \,\,\, I \,\,\, \frac{2\alpha \beta}{(\alpha - \beta)^2}I \,\,\, \frac{2\alpha \beta}{(\alpha - \beta)^2}I  \right].
\end{align*}
\end{small}
Then, the gains
\begin{alignat*}{2}
  K_u &= L_u \, (\tilde{Q}_z^{-1}  \otimes \tilde{\Lambda}_m^{-1} ), \,\,
  K_{w_{\psi}} = L_{w_{\psi}} \, \diag_l \, (\tilde{Q}_z ^{-1}  \otimes
        \tilde{\Lambda}_{k_{\psi}} ^{-1} ) , \\
  K_z &= L_z \, P^{-1} , \,\,
  K_{\phi} = L_{\phi} \, \tilde{T}^{-1}_{k_{\phi}} , \,\,
  K_{\psi} = L_{\psi} \, \tilde{T}^{-1}_{l k_{\psi}}
\end{alignat*}
ensure that~\eqref{eq:LFR_closed_loop} \revision{is well-posed and}
locally exponentially stable, with region of attraction
$\mathcal{Z}_{ROA} = \{ z \in \R^l \, \big| \, z^T P^{-1}z \leq 1 \}$
\revision{being forward invariant and
  $\mathcal{Z}_{ROA} \subseteq \mathcal{Z}$.}
\end{theorem}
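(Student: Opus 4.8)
The plan is to run a full-block S-procedure argument (in the spirit of LPV synthesis) on the closed-loop LFR~\eqref{eq:LFR_closed_loop}, certifying the quadratic Lyapunov function $V(z)=z^{T}P^{-1}z$, and then to convexify the resulting bilinear matrix inequality by a congruence together with a change of variables adapted to the block-diagonal Kronecker structure of $\mathbf{\Delta}_c$. First, for any $\Delta_c\in\mathbf{\Delta}_c$ the closed-loop relations $q=\Delta_c p$ and $p=Cz+Dq$ give $\begin{bmatrix}q\\p\end{bmatrix}=\begin{bmatrix}\Delta_c\\I\end{bmatrix}p$, so $(p,q)$ satisfies the quadratic constraint with the multiplier $\Pi=\begin{bmatrix}Q&S\\S^{T}&R\end{bmatrix}$ of~\eqref{eq:Delta_combined}, whose sign pattern ($Q\prec0$, since $Q_z\prec0$, the $\Lambda$-blocks are positive definite, and $-2T\prec0$; $R\succeq0$) is exactly the one needed for the dualization step below. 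A standard dissipativity computation then shows that $V(z^+)-V(z)<0$ along the closed loop whenever the uncertainty description is valid, provided some valid multiplier makes
\[
\begin{bmatrix} A & B \\ I & 0 \end{bmatrix}^{T} \begin{bmatrix} P^{-1} & 0 \\ 0 & -P^{-1} \end{bmatrix} \begin{bmatrix} A & B \\ I & 0 \end{bmatrix} + \begin{bmatrix} 0 & I \\ C & D \end{bmatrix}^{T} \Pi \begin{bmatrix} 0 & I \\ C & D \end{bmatrix} \prec 0 ,
\]
with $A=\mathcal{A}+\mathcal{B}K_z$ and $B,C,D$ as in~\eqref{eq:LFR_closed_loop}. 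Strictness of this inequality also yields well-posedness of~\eqref{eq:LFR_closed_loop} (equivalently of the implicit controller~\eqref{eq:controller_implicit}): if $I-D\Delta_c$ were singular for some valid $\Delta_c$, then taking $z=0$ and a nonzero $p$ in its kernel with $q=\Delta_c p$ would make the quadratic form nonnegative at the nonzero vector $(0,q)$, a contradiction.

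Next, I would convexify the displayed inequality, which is nonconvex in $(P^{-1},K_\bullet,\Pi)$. Perform a congruence by $\diag(P,\,\cdot\,)$ on the state block; dualize the multiplier, replacing $\Pi$ by $\Pi^{-1}=\begin{bmatrix}\tilde Q&\tilde S\\ \tilde S^{T}&\tilde R\end{bmatrix}$ --- legitimate precisely because $Q\prec0$, $R\succeq0$, and the hypothesis $\alpha\beta<0$ makes the blocks $\tfrac{2\alpha\beta}{(\alpha-\beta)^2}\tilde T_\bullet$ negative definite so that $\tilde Q\prec0$; and introduce $L_z=K_zP$, $L_u=K_u(\tilde Q_z\otimes\tilde\Lambda_m)$, $L_{w_\psi}=K_{w_\psi}\diag_l(\tilde Q_z\otimes\tilde\Lambda_{k_\psi})$, $L_\phi=K_\phi\tilde T_{k_\phi}$, $L_\psi=K_\psi\tilde T_{lk_\psi}$, matched to the block-diagonal Kronecker pattern of $\mathbf{\Delta}_c$. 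Substituting the closed-loop $A,B,C,D$ in terms of the gains and factoring the off-diagonal multiplier block as $\tilde S=S_LS_R$ so that every occurrence of the gains remains affine, one collects exactly the matrices $X_{AP},X_{CP},X_{BL},X_{B\tilde Q},X_{DL},X_{D\tilde Q}$ and the block $Q_{\inte}$; a Schur complement against the positive-definite lower-right block $\diag(P,-\tilde Q)$ then reproduces the first LMI in~\eqref{eq:final_LMI}. Reading the change of variables backwards yields the stated gain formulas, and each matrix inverted there ($P,\tilde\Lambda_m,\tilde\Lambda_{k_\psi},\tilde T_{k_\phi},\tilde T_{lk_\psi}$) is positive definite, hence invertible, so the controller is well defined.

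Finally, I would assemble the stability claim. By Lemma~\ref{le:bilinear-uncertainty}, the descriptions $\mathbf{\Delta}_m$ and $\mathbf{\Delta}_{k_\psi}$ (hence $\mathbf{\Delta}_c$) are valid exactly when $z\in\mathcal{Z}$, while the activation blocks hold globally; thus the first LMI gives $V(z^+)\le(1-\epsilon)V(z)$ for some $\epsilon\in(0,1)$ whenever $z\in\mathcal{Z}$. The second LMI in~\eqref{eq:final_LMI} is an S-procedure certificate, dualized to be affine in $P$ via $\begin{bmatrix}Q_z&S_z\\ S_z^{T}&R_z\end{bmatrix}^{-1}=\begin{bmatrix}\tilde Q_z&\tilde S_z\\ \tilde S_z^{T}&\tilde R_z\end{bmatrix}$ and a Schur complement, for the inclusion $\mathcal{Z}_{ROA}=\{z:z^{T}P^{-1}z\le1\}\subseteq\mathcal{Z}$. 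Hence $z(0)\in\mathcal{Z}_{ROA}$ implies $z(0)\in\mathcal{Z}$, so $V(z(1))<V(z(0))\le1$ and $z(1)\in\mathcal{Z}_{ROA}\subseteq\mathcal{Z}$; by induction $z(t)\in\mathcal{Z}_{ROA}$ for all $t$ with $V(z(t))\le(1-\epsilon)^{t}V(z(0))$. Since $V$ is a positive-definite quadratic, this is local exponential stability of the shifted origin, i.e.\ of $z_*$, with $\mathcal{Z}_{ROA}$ forward invariant and contained in $\mathcal{Z}$.

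The step I expect to be the main obstacle is the convexification: because the closed loop~\eqref{eq:LFR_closed_loop} absorbs the controller's own internal neural-network states, the gains enter $C$ and $D$ as well as $A$ and $B$, and the structured block-diagonal multiplier with its Kronecker blocks must be dualized and then factored ($\tilde S=S_LS_R$) in exactly the right way for all substitutions to remain affine in $(L_\bullet,\tilde\Lambda_\bullet,\tilde T_\bullet,P)$. Verifying that this bookkeeping collapses precisely to $Q_{\inte}$ and the $X_\bullet$ blocks, and that $\alpha\beta<0$ is exactly what keeps the dualization valid, is where the real effort lies.
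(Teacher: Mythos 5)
Your proposal is correct and takes essentially the same route as the paper's proof: both certify $V(z)=z^{T}P^{-1}z$ through the robust-stability quadratic constraint of \cite[Thm 2]{CWS:01}, dualize the multiplier (with $\alpha\beta<0$ being exactly what legitimizes the dualization, as you note), apply the factorization $\tilde{S}=S_L S_R$ together with the linearizing change of variables $L_\bullet$, and recover the left LMI in \eqref{eq:final_LMI} via Schur complements, while the right LMI is handled by dualization plus the S-procedure to obtain $\mathcal{Z}_{ROA}\subseteq\mathcal{Z}$ and forward invariance. The only cosmetic difference is that you write the primal analysis inequality first and then dualize (and give a self-contained well-posedness argument), whereas the paper states the dualized inequality directly and cites \cite[Thm 2]{CWS:01} for well-posedness.
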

%
%
%
\begin{proof}
  We begin by showing that $V(z) = z^T P^{-1} z$ is a Lyapunov
  function for the closed-loop system \eqref{eq:LFR_closed_loop}.
  %
  %
  This is done by applying \cite[Thm 2]{CWS:01} and the dualization
  lemma \cite[Lemma A.1]{CWS:01} since $\alpha \beta < 0$ to obtain the quadratic inequality
  {\small
    \begin{equation*}
      M = 
      \left[\def\arraystretch{1.0}
        \begin{array}{cc}
          A^T & C^T \\
          -I & 0    \\ 
          B^T & D^T \\
          0 & -I   
        \end{array}
      \right] ^T
      \hspace{-0.2cm}
      \left[\def\arraystretch{1.0}
        \begin{array}{cccc}
          -P & 0 & 0 & 0 \\
            0 & P & 0 & 0 \\ 
          0 & 0 & {\tilde{Q}} & {\tilde{S}} \\
          0 & 0 & {\tilde{S}}^T& {\tilde{R}}    
        \end{array}
      \right]
      \hspace{-0.2cm}
      \left[\def\arraystretch{1.0}
        \begin{array}{cc}
          A^T & C^T \\
          -I & 0    \\ 
          B^T & D^T \\
          0 & -I   
        \end{array}
      \right] \succ 0     
    \end{equation*}   
  }
  %
  %
  Expanding the expression we obtain two terms
  \begin{equation*}
    M = 
    \begin{bmatrix}
        \star
    \end{bmatrix}^T 
    \begin{bmatrix}
        -P &\hspace{-0.2cm} 0 \\
        0 &\hspace{-0.2cm} P 
    \end{bmatrix} 
    \begin{bmatrix}
        A^T &\hspace{-0.2cm} C^T \\
        -I &\hspace{-0.2cm} 0 
    \end{bmatrix} \hspace{-0.1cm}+\hspace{-0.1cm} 
    \begin{bmatrix}
        \star
    \end{bmatrix}^T \hspace{-0.1cm}
    \begin{bmatrix}
        {\tilde{Q}} &\hspace{-0.2cm} {\tilde{S}} \\
        {\tilde{S}}^T &\hspace{-0.2cm} {\tilde{R}}
    \end{bmatrix} \hspace{-0.1cm}
    \begin{bmatrix}
        B^T &\hspace{-0.2cm} D^T \\
        0 &\hspace{-0.2cm} -I 
    \end{bmatrix}.    
\end{equation*}
Now, in order to take care of the quadratic terms, we have 
{\small
\begin{align*}
    &M = 
    \begin{bmatrix}
        \star
    \end{bmatrix}^T 
    \begin{bmatrix}
        0 & 0 \\
        0 & P 
    \end{bmatrix} 
    \begin{bmatrix}
        0 & 0 \\
        -I & 0 
    \end{bmatrix} 
    - 
    \begin{bmatrix}
        \star
    \end{bmatrix} 
    P^{-1} 
    \begin{bmatrix}
        X_{AP} \\
        X_{CP}
    \end{bmatrix}^T + \\
    &
    \begin{bmatrix}
        \star
    \end{bmatrix}^T 
    \begin{bmatrix}
        0 & {\tilde{S}} \\
        \star & {\tilde{R}}
    \end{bmatrix} 
    \begin{bmatrix}
        B^T & D^T \\
        0 & -I 
    \end{bmatrix} 
    + 
    \begin{bmatrix}
        \star
    \end{bmatrix} 
    \tilde{Q}^{-1} 
    \begin{bmatrix}
        X_{B\tilde{Q}} \\
        X_{D\tilde{Q}}
    \end{bmatrix}^T.
\end{align*}
}
We use the factorization $\tilde{S} = S_L \, S_R$ to write
{\small
\begin{equation*}
\begin{bmatrix}
    \star
\end{bmatrix}^T \, 
\begin{bmatrix}
    0 & {\tilde{S}} \\
    \star & {\tilde{R}}
\end{bmatrix} \, 
\begin{bmatrix}
    B^T & D^T \\
    0 & -I 
\end{bmatrix} =     
\begin{bmatrix}
    \star
\end{bmatrix}^T
\begin{bmatrix}
    0 & S_R \\
    \star & {\tilde{R}}
\end{bmatrix}
\begin{bmatrix}
    X_{BL}^T & X_{DL}^T \\
    0 & -I
\end{bmatrix} 
\end{equation*}
}
Then, the matrix $M$ can be written as
{\small
\begin{align*}
&M = 
    \begin{bmatrix}
        \star
    \end{bmatrix}^T
    \begin{bmatrix}
        0 & 0 \\
        0 & P
    \end{bmatrix}
    \begin{bmatrix}
        0 & 0 \\
        -I & 0
    \end{bmatrix} +
\begin{bmatrix}
    \star
\end{bmatrix}\, 
\tilde{Q}^{-1} \, 
\begin{bmatrix}
    X_{B\tilde{Q}} \\
    X_{D\tilde{Q}}
\end{bmatrix}^T +
  \\
& \quad \begin{bmatrix}
    \star
\end{bmatrix}^T
\begin{bmatrix}
    0 & S_R \\
    \star & {\tilde{R}}
\end{bmatrix}
\begin{bmatrix}
    X_{BL}^T & X_{DL}^T \\
    0 & -I
\end{bmatrix}
 - \,
\begin{bmatrix}
    \star
\end{bmatrix}\, 
P^{-1} \, 
\begin{bmatrix}
    X_{AP} \\
    X_{CP}
\end{bmatrix}^T \\
&= Q_{\inte} + \begin{bmatrix}
    \star
\end{bmatrix}\, 
\tilde{Q}^{-1} \, 
\begin{bmatrix}
    X_{B\tilde{Q}} \\
    X_{D\tilde{Q}}
\end{bmatrix}^T \, - \,
\begin{bmatrix}
    \star
\end{bmatrix}\, 
P^{-1} \, 
\begin{bmatrix}
    X_{AP} \\
    X_{CP}
\end{bmatrix}^T 
\end{align*}
} which upon taking the Schur complement~\cite{CKL-RM:04}
%
%
twice gives the left inequality in equation \eqref{eq:final_LMI}. The
feasibility of this LMI then implies that $V(z) = z^T P^{-1}z$ is a
Lyapunov function.

After applying the dualization lemma \cite[Lemma A.1]{CWS:01} to the
right inequality in \eqref{eq:final_LMI}, we obtain
\begin{equation}
  \begin{bmatrix}
    Q_z & S_z \\
    S_z^T & R_z 
  \end{bmatrix} - \nu 
  \begin{bmatrix}
    -P^{-1} & 0 \\
    0 & 0 
  \end{bmatrix} \succ 0 .
\end{equation}
Applying the S-procedure~\cite{CS-SW:00} shows that
$\mathcal{Z}_{ROA}
\subseteq \mathcal{Z}$. Since $\mathcal{Z}_{ROA}$ is forward
invariant, closed-loop trajectories that start there remain in
$\mathcal{Z}$ for all time. \revision{Finally, well posedness of the
  closed-loop system and hence of the implicit equations
  \eqref{eq:controller_implicit} is ensured if the LMI
  \eqref{eq:final_LMI} is feasible, cf.~\cite[Thm 2]{CWS:01}.}
\end{proof}
\revision{
  \begin{remark}[LMI for $\alpha = 0$]
    %
    %
    For the special case $\alpha = 0$ (e.g., ReLU activation), the
    left matrix in \eqref{eq:final_LMI} becomes singular. Instead, one
    can treat the case $\alpha = 0$ separately, adapting the proof of
    Theorem~\ref{th:controller-LMIs} to obtain a new LMI,
    corresponding to removing the last $k_{\phi} + lk_{\psi}$ rows and
    columns from the left matrix in \eqref{eq:final_LMI}, that ensures
    stability and avoids numerical issues associated with the
    singularity. \oprocend
\end{remark}
}

%
%
\revision{ We conclude this section illustrating the controller design
  in the 4-dimensional generalized bilinear system
\begin{equation}
  \label{eq:example_system}
  x^+ = A + B_0 u +
  \sum_{i=1}^{2}\bigl(u_i B_i + \phi_i(u_i)\,C_i\bigr)\,x
\end{equation}
where $x \in \R^4, \, u \in \R^2$, $\phi_1(u_1) = e^{u_1} - 1$ and
$\phi_2(u_2) = \frac{u_2^3}{1 + u_2^2}$ are approximated by ReLU MLP
neural networks with hidden sizes = [10,10].  The matrices are given
by
\begin{small}
\begin{equation*}
  \setlength{\arraycolsep}{0.6pt}
  \def\arraystretch{1}
  \begin{aligned}
    &A = 
      \left[
  \begin{array}{@{}cccc@{}}
    1    & 0.3  & 0.4   & 0.1  \\
    1    & -0.2 & 0     & 0.05 \\
    0    & 1.2  & -0.5  & 0.02 \\
    0    & 0    & 0     & 0.2
  \end{array}
\right], 
B_1 = 
\left[
  \begin{array}{@{}cccc@{}}
    0 & 0 & 0 & 0\\
    0 & 0 & 0 & 0\\
    0 & 0 & 0 & 0\\
    0 & 0 & 0 & -0.5
  \end{array}
\right], 
B_2 = 
\left[
  \begin{array}{@{}cccc@{}}
    -0.3 & 0    & 0 & 0\\
     0.3 & 0    & 0 & 0\\
     0   & 0    & 0 & 0\\
     0   & 0    & 0 & 0
  \end{array}
\right],
\\
&B_0 = 
\left[
  \begin{array}{@{}cc@{}}
    1    & 0 \\  
    0    & 1 \\  
    0    & 1 \\  
   -1    & 0  
  \end{array}
\right],
C_1 = -
\left[
  \begin{array}{@{}cccc@{}}
    0    & 0    & 0    & 0    \\
    0    & 0    & 0    & 0    \\
    0    & 0    & 0    & 0    \\
   0.6  & 0.3 & 0.3 & 1.2
  \end{array}
  \right],
  C_2 = 
  \left[
  \begin{array}{@{}cccc@{}}
    -0.3  & 0    & 0     & 0    \\
    0.3  & 0    & 0     & 0    \\
    0    & 0    & 0     & 0    \\
    0    & 0    & -0.15 & 0
  \end{array}
  \right] .
  \end{aligned} 
\end{equation*}
\end{small}
} We conduct simulations in Python using the \emph{CVXPY} toolbox
\cite{SD-SB:16} with the SDP solver \emph{MOSEK}~\cite{aps2022mosek}.
We maximize $\trace(P)$ such that the LMIs are feasible.
\revision{The target equilibrium is $z_* = 0,u_* = 0$ and
  $\mathcal{Z} = \{z \in \R^4 \mid \|z\|_2 \leq
  0.08\}$. Figure~\ref{fig:simulation_results} compares the results
  obtained under our approach with the baseline gain-scheduled
  controller in \cite{RS-JB-FA:23}, which works with the bilinear
  system and ignores the neural network nonlinearities
  $\phi_1$,~$\phi_2$. Beyond the improvement in performance, our
  approach is able to provide a guaranteed region of attraction
  $\mathcal{Z}_{ROA}$, with $\trace(P) = 0.1959$.}
\vspace*{-2ex}  
\begin{figure}[htb]
  \centering
  \includegraphics[width=\linewidth]{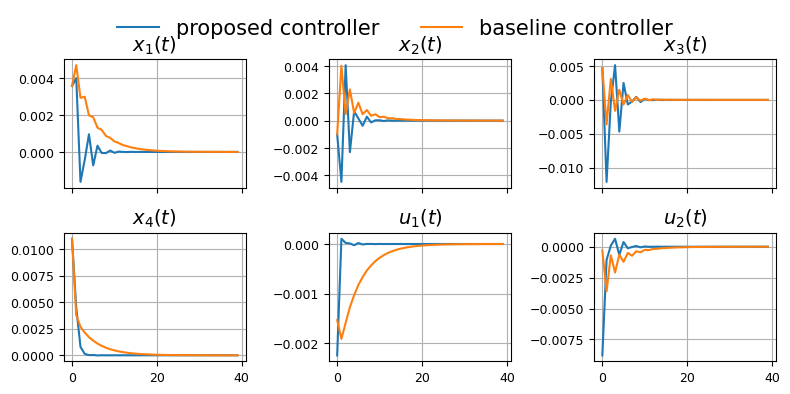}
  \vspace*{-4ex}
  \caption{Time evolution of~\eqref{eq:example_system}
    using the controller \eqref{eq:controller_implicit}, which
    accounts for the nonlinearities $\phi_1$ and $\phi_2$. Results are
    compared with the controller from \cite{RS-JB-FA:23}, which does
    not take into account these
    nonlinearities.}\label{fig:simulation_results}
  \vspace*{-1ex}
\end{figure}
%
%
%
%
%
%

\begin{remark}\longthmtitle{Computational Complexity and LMI
    Feasibility} \revision{LMI-based controller synthesis can
    become computationally expensive for high-dimensional
    systems. This can be mitigated by neural network
    pruning~\cite{DB-JJJO-JF-JG:20} and by leveraging
    structure-exploiting solvers~\cite{DG-TH-CWS-CE:23} to reduce
    problem size and computation time. Moreover, feasibility of LMIs
    is a well-known open problem. Integral quadratic constraints
    (IQCs) can help reduce conservatism and improve
    feasibility.~Recently,~\cite{HG-TY-YE-VM-DP-ST:24} shows that, for
    LTI-NFLs, LMI infeasibility implies a destabilizing nonlinearity
    within the considered class exists.  Extending such IQC-based
    feasibility results to bilinear-NFLs is a promising direction for
    future research.  \oprocend
  }
\end{remark}

\section{Conclusions}
We have presented an approach to locally exponentially stabilize a
bilinear-NFL system. Our methodology is based on exactly reformulating
the system with a linear fractional representation and abstracting out
all nonlinearities using fabricated uncertainties, for which we find
appropriate over-approximating sets using quadratic constraints. The
resulting system is then amenable to control synthesis via LPV design.
Future work will reduce conservatism by using ICQs, incorporating rate
bounds on parameters and neural network pruning for defining the
uncertainty sets.
%

%
%

\bibliographystyle{ieeetr}
\bibliography{../bib/alias,../bib/Main-add,../bib/Main,../bib/JC}

\end{document}